\documentclass[11pt]{article}

\usepackage[margin=1in]{geometry}
\usepackage{amsmath,amssymb,amsthm,mathtools}
\usepackage{braket}
\usepackage{hyperref}
\hypersetup{hidelinks}
\usepackage{microtype}
\usepackage{authblk}
\usepackage[backend=biber]{biblatex}
\providecommand{\orgdiv}[1]{#1}
\providecommand{\orgname}[1]{#1}
\providecommand{\orgaddress}[1]{#1}
\providecommand{\street}[1]{#1}
\providecommand{\city}[1]{#1}
\providecommand{\postcode}[1]{#1}
\providecommand{\state}[1]{#1}
\providecommand{\country}[1]{#1}

\usepackage{tikz}
\usetikzlibrary{arrows.meta, positioning, shapes.geometric, calc, decorations.pathreplacing}

\newcommand{\Tr}{\operatorname{Tr}}
\newcommand{\id}{\mathbb{I}}
\newcommand{\Fid}{\mathsf{F}}        
\newcommand{\Tdist}{\mathsf{D}}      

\theoremstyle{definition}
\newtheorem{definition}{Definition}
\newtheorem{example}{Example}

\theoremstyle{plain}
\newtheorem{proposition}{Proposition}

\theoremstyle{remark}

\title{What does measuring one qubit reveal about another?\\ $K$-networks as a directed diagnostic for quantum circuits}

\author[1]{{Kostas} {Blekos}}

\author[1,2]{{Paulo Vitor} {Itabora\'i}}

\affil[1]{%
\orgdiv{Computation-Based Science and Technology Research Center},
\orgname{The Cyprus Institute},
\orgaddress{\street{20 Kavafi Street}, \city{Nicosia}, \postcode{2121}, \country{Cyprus}}}

\affil[2]{%
\orgdiv{Center for Quantum Technology and Applications (CQTA)},
\orgname{Deutsches Elektronen-Synchrotron (DESY)},
\orgaddress{\street{Platanenallee 6}, \city{Zeuthen}, \postcode{15738}, \state{Brandenburg}, \country{Germany}}}

\date{\today}

\begin{document}
\maketitle

\begin{abstract}
Many-qubit circuit states are hard to inspect directly, so they are often summarized by pairwise graph weights.
Common pairwise weights report symmetric correlations, while many circuit questions are directed and basis-specific: if qubit $i$ is measured in a given basis, how strongly does the outcome reshape the conditional state of qubit $j$?
We define $K_{i\to j}$, a directed, basis-conditioned edge weight for this question.
It is large when the two measurement outcomes occur with comparable probability and leave qubit $j$ in clearly different conditional states; it is zero when the source outcome is deterministic or the target states are indistinguishable.
The scalar uses standard binary-ensemble distinguishability; the paper's contribution is to turn this conditional comparison into a directed network layer for circuit states.
The resulting networks are computable from two-qubit reduced density matrices.
They are diagnostic (not entanglement measures): for pure two-qubit states $K$ reduces to the tangle $C^2$ (squared concurrence)~\cite{WoottersConcurrence,CKWTangle}, while separable mixed states can reach $K=1$.
Examples on teleportation, Grover, QAOA, and random circuit families show the intended use: $K$-networks map feed-forward, phase, and interaction-graph structure that symmetric or computational-basis summaries can leave weak or absent.
\end{abstract}

\section{Introduction}

When analyzing many-qubit circuits, we often want to answer questions of the form: ``If I measure qubit $i$, how much does the conditional state of qubit $j$ change?''
Standard pairwise quantities such as mutual information, concurrence, or discord quantify correlations in an undirected way.
They do not directly encode the measurement-conditioned branching that appears when the circuit has a preferred computational basis, or when readout on one qubit is used to infer information about another.

Large multi-qubit states generated by quantum algorithms are hard to interpret directly.
A common approach is to visualize \emph{pairwise} structure as a weighted graph: qubits are nodes and edges encode a chosen two-body quantity.
Recent work formalizes this idea via \emph{pairwise tomography networks} and multiplex layers (concurrence, discord, mutual information, purity, etc.)~\cite{GarciaPerezPTN}.
Mutual-information networks and complex-network diagnostics have also been used to characterize many-body structure and dynamics~\cite{ValdezMI,JonesSmallWorld}. 

This paper defines $K_{i\to j}$ as a directed edge weight for such network visualizations.
Given a measurement on qubit $i$, we compare the two conditional states produced on qubit $j$.
Their distinguishability is weighted by the measurement outcome probabilities, so a nearly deterministic measurement contributes little even if the rare branch is very different.
The resulting score is computed from two-qubit reduced density matrices (2-RDMs), making it compatible with existing pairwise tomography workflows.
The directionality of $K$ records source-target conditional dependence for the chosen measurement; it should not be read as causal structure.
This source-target asymmetry is analogous to the asymmetry between the measuring and steered parties in quantum steering~\cite{UolaSteering}.
The score is designed for visualization and basis-aligned diagnostics, complementary to correlation measures such as discord or mutual information.
Because the measurement basis is explicit, $K$ can also be used in partial tomographic settings that target a fixed readout basis or a restricted set of pairwise correlators.
In that setting, the output is a view of \textit{detectable measurement branching}, with less data than full two-qubit tomography.

This is the intent of the construction: $K$ is a basis-conditioned conditional-distinguishability network.
It scores how strongly a source-qubit measurement separates the target-qubit conditional states, giving a bounded directed layer aligned with the chosen readout basis.

\paragraph{Contributions.}
The object we propose is a directed, basis-conditioned network layer whose edges carry $K_{i\to j}$, a bounded fidelity-based score of measurement-conditioned dependence.
The scalar is close to known quantities (the complement of a classical--quantum reliability parameter, and the fidelity analogue of conditional reduced-state distance constructions).
The contribution is the layer itself: it is bounded, straightforward to estimate, and directly interpretable.
$K$ satisfies $0\le K\le 1$, has a clean characterization of when it vanishes, is invariant under target local unitaries, is covariant under source-basis rotations, and is non-increasing under channels on the target.
Writing it in the two-qubit Bloch/correlation-tensor representation gives a closed formula for each direction $K^{(\hat n)}$, a basis-optimized $K_{\max}$ (with the exact form $K_{\max}=\sigma_{\max}(T)^2$ for locally maximally mixed states), and a fixed-basis estimator that needs only a few one- and two-qubit correlators.
Two special cases connect $K$ to standard quantities: for pure two-qubit states it equals the tangle $C^2$ (squared concurrence), independent of basis, while in the balanced case it brackets the Helstrom guessing probability through the Fuchs--van de Graaf inequalities. A separable classical state can still reach $K=1$, however, so the mixed-state score is not an entanglement measure.
We compare $K$ against its closest relatives on every example in the paper (Section~\ref{subsec:comparison}), demonstrate the layer on a teleportation circuit in deferred-measurement form, where every edge has a closed form and the directed graph shows feed-forward structure that symmetric measures cannot express, and include an edge-recovery benchmark on random phase-oracle and QAOA circuits.

\subsection{Related work}
\label{subsec:related}
The closest network-level comparators are \emph{pairwise tomography networks}~\cite{GarciaPerezPTN} and \emph{mutual-information network} constructions~\cite{ValdezMI,JonesSmallWorld}.
Those works motivate pairwise network layers for visualizing many-body quantum states.
$K$ adds a layer in which the measured qubit and the target qubit play different roles, and in which the chosen measurement basis remains part of the definition.

The construction $\{(p_b,\rho_{j|b})\}_{b=0,1}$, obtained by measuring qubit $i$ and conditioning the state of $j$, is closely related to \emph{quantum steering} and to the notion of an \emph{assemblage}~\cite{UolaSteering}.
For two-qubit states, the same conditional-state geometry is often described by the steering ellipsoid: the set of Bloch vectors to which one party can steer the other by local measurements~\cite{JevticSteeringEllipsoids}.
We use the same Pauli/Bloch-matrix coordinates used in two-qubit state-space descriptions~\cite{GamelBlochMatrix}.
We emphasize that $K$ is \emph{not} a steering measure: a separable, classically correlated state reaches $K=1$ (Table~\ref{tab:sanity-checks}), so $K$ certifies neither steering nor any other form of quantumness.
The assemblage is the shared object; the scalar extracted from it here is chosen for diagnostic readability.
Among correlation quantifiers, measurement-induced nonlocality (MIN)~\cite{LuoMIN} and measurement-induced disturbance~\cite{LuoMID} are also nearby: they quantify the global disturbance caused by local measurements, and the one-way information deficit~\cite{OppenheimDeficit} the entropic cost of a local measurement, while $K_{i\to j}$ asks a narrower question: how distinguishable the two conditional states of one designated target qubit become.

At the level of the scalar itself, $K$ is a compact functional of the binary conditional ensemble $\{(p_b,\rho_{j|b})\}$, an object that is standard across several literatures, and we make no claim of a fundamentally new correlation quantity.
Three identities and near-identities show this precisely.
First, the ensemble defines a binary classical--quantum state, and the fidelity-based reliability parameter $Z(X|B)=2\sqrt{p_0p_1}\,f(\rho_{j|0},\rho_{j|1})$ of classical--quantum polar coding~\cite{RenesPolarCoding,WildeGuha}, with $f=\sqrt{\Fid}$ the root fidelity, gives the exact complement identity $K_{i\to j}=4p_0p_1-Z(X|B)^2$ (Appendix~\ref{sec:cq-reliability}).
Second, defining quantum correlations through distances between post-measurement reduced states goes back at least to Ref.~\cite{GuoAvgDist}; for a binary ensemble any squared-norm version of that construction reduces to $p_0p_1\,\|\rho_{j|0}-\rho_{j|1}\|^2$, and $K$ is its fidelity (purified-distance) analogue, normalized by the branching envelope.
Third, the entropic functional of the same ensemble is the fixed-basis one-way classical correlation (Holevo quantity)
\begin{equation}
\chi^{(\hat n)}_{i\to j} \;=\; S(\rho_j) - \sum_b p_b\, S(\rho_{j|b}),
\label{eq:chi-def}
\end{equation}
The rank-1 projective optimum of $\chi^{(\hat n)}$ is the projective-measurement version of the Henderson--Vedral classical correlation~\cite{HendersonVedral} entering quantum discord~\cite{OllivierZurekDiscord}.
The unrestricted Henderson--Vedral quantity allows the broader measurement class used in the discord literature.
Section~\ref{subsec:comparison} compares $K$ quantitatively against these quantities and against raw correlator baselines on the examples of this paper.
The paper's added object is the directed, basis-conditioned network layer built from this scalar, together with the structural results (tangle reduction, Bloch closed forms, target monotonicity, correlator estimator) that make the layer straightforward to estimate and interpret.
The claims here are narrower than a general correlation theory.
$K$ is meant to be compact, bounded, directed, and directly interpretable as conditional-state distinguishability in a chosen basis.

\subsection{Background}
Before defining $K$, we fix notation for the conditional states produced by a local measurement and for the fidelity convention used to compare them.

\subsubsection{Conditional states under a local projective measurement}\label{subsec:conditional_states}
Let $\rho$ be an $N$-qubit density operator and let $\rho_{ij} = \Tr_{\overline{ij}}[\rho]$ be the
two-qubit reduced state on qubits $(i,j)$.

Fix a rank-1 projective measurement on qubit $i$ with outcomes $b\in\{0,1\}$:
\begin{equation}
  \Pi^{(i)}_b = \ket{b}\!\bra{b}, \qquad \Pi^{(i)}_0+\Pi^{(i)}_1=\id.
\end{equation}
(We later parametrize $\{\ket{0},\ket{1}\}$ as an arbitrary Bloch-sphere basis.)

Define the unnormalized conditional state of qubit $j$ given outcome $b$ by
\begin{equation}\label{eq:unormalized_conditional_state}
  \sigma_{j|b} \;:=\; \Tr_i\!\Big[(\Pi^{(i)}_b\otimes \id_j)\,\rho_{ij}\,(\Pi^{(i)}_b\otimes \id_j)\Big],
\end{equation}
with probability $p_b := \Tr(\sigma_{j|b})$.
For $p_b>0$, the normalized conditional state is
\begin{equation}
  \rho_{j|b} := \sigma_{j|b}/p_b \qquad (p_b>0).
\end{equation}

\subsection{Fidelity and trace distance}
We use the Uhlmann--Jozsa (squared) fidelity
\begin{equation}
  \Fid(\rho,\sigma)
  :=
  \Big(\Tr\sqrt{\sqrt{\rho}\,\sigma\,\sqrt{\rho}}\Big)^2 \in [0,1],
\end{equation}
as in \cite{JozsaFidelity}.
We use the \emph{squared} Uhlmann fidelity; some references denote $\sqrt{\Fid}$ as the fidelity.
Our choice matches the convention in Eq.~\eqref{eq:fvdg} below.
A standard operationally meaningful alternative is the trace distance
\begin{equation}
  \Tdist(\rho,\sigma) := \tfrac12\|\rho-\sigma\|_1.
\end{equation}
The Fuchs--van de Graaf inequalities relate the two:
\begin{equation}
  1-\sqrt{\Fid(\rho,\sigma)} \;\le\; \Tdist(\rho,\sigma) \;\le\; \sqrt{1-\Fid(\rho,\sigma)}.
  \label{eq:fvdg}
\end{equation}

See \cite{FuchsVanDeGraaf}. These inequalities let us interpret fidelity gaps as bounds on
distinguishability in hypothesis-testing tasks (see also \cite{BaeKwekReview}).
In this paper, the relevant fidelity gap is
\begin{equation}
  \Delta_F(\rho_{j|0},\rho_{j|1})
  := 1-\Fid(\rho_{j|0}, \rho_{j|1}) \in [0,1].
  \label{eq:meas_fid_distingish}
\end{equation}
It is zero when the two conditional states are identical and equal to one when they are perfectly distinguishable.

\section{Definition of \texorpdfstring{$K$}{K} and Interpretation}

We construct $K_{i\to j}$ from the conditional states that arise when qubit $i$ is measured, weighting their distinguishability by the measurement outcome probabilities.

\begin{definition}[Conditional-distinguishability correlation score]
Fix an ordered pair of qubits $(i,j)$ in an $N$-qubit state $\rho$.
Fix a rank-1 projective measurement $\{\Pi^{(i)}_0,\Pi^{(i)}_1\}$ on qubit $i$, giving probabilities
$p_0,p_1$.
For outcomes with $p_b>0$, let $\rho_{j|b}$ be the normalized conditional state on qubit $j$.
Define
\begin{equation}
  K_{i\to j}(\rho;\{\Pi_b\})
  :=
  \begin{cases}
  4\,p_0p_1\Big(1-\Fid(\rho_{j|0},\rho_{j|1})\Big), & p_0p_1>0,\\
  0, & p_0p_1=0.
  \end{cases}
  \label{eq:Kdef}
\end{equation}
Thus a deterministic measurement outcome contributes zero by definition, and no normalized conditional state is needed for the absent branch.
\label{def:pairwiseK_Z}
\end{definition}

\begin{figure}[ht]
\centering
\includegraphics[width=.92\linewidth]{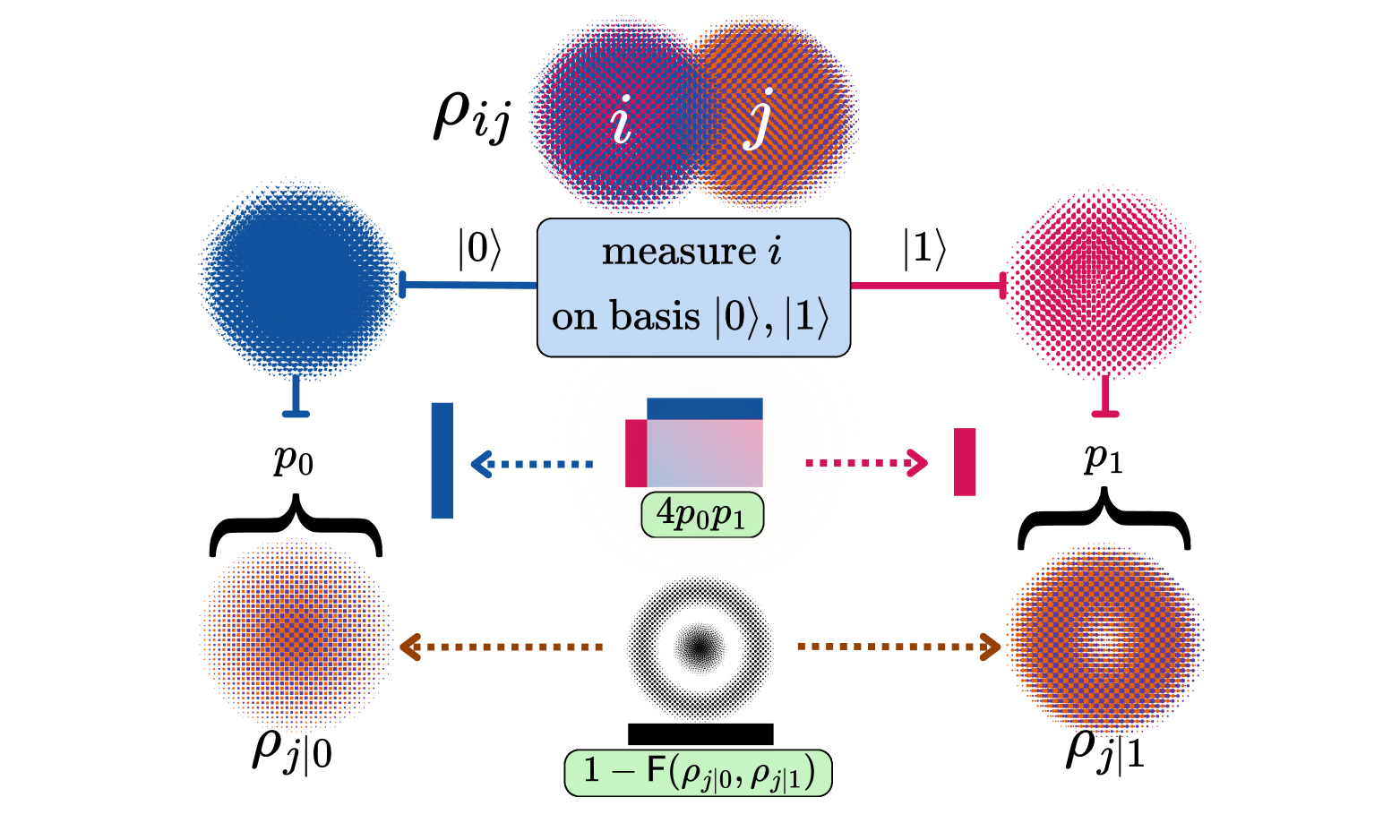}
\caption{Schematic definition of the measurement-conditioned pairwise score $K_{i\to j}$.
A two-outcome projective measurement on qubit $i$ prepares conditional states $\rho_{j|0}$ and $\rho_{j|1}$ on qubit $j$ with probabilities $p_0$ and $p_1$.
When both branches occur, the score $K$ combines the branching balance factor $4p_0p_1$ (maximal when $p_0=p_1=\tfrac12$) with the conditional-state distinguishability $1-\Fid(\rho_{j|0},\rho_{j|1})$.
If one branch is absent, Definition~\ref{def:pairwiseK_Z} sets $K=0$.}
\label{fig:K-schematic}
\end{figure}

\paragraph{Components of the score}

The factor $1-\Fid(\rho_{j|0},\rho_{j|1})$ measures conditional-state distinguishability on the target.
The factor $4p_0p_1$ records branching balance: it is maximal for $p_0=p_1=\tfrac12$ and vanishes when one branch is absent.
Thus $K=1$ exactly for balanced orthogonal branches, while $K=0$ when the two conditional states are indistinguishable or when the source measurement is deterministic.
The branching factor is a normalization \emph{choice}, made so that $K$ reports balanced branching with distinguishable targets; it is not forced by state discrimination.
In particular, $K$ should not be read as a guessing probability: a deterministic source outcome can be guessed perfectly ($P_{\mathrm{succ}}=1$) yet has $K=0$ by definition.
For imbalanced priors the pair $(p_0,p_1)$ must be retained alongside $K$; the operational discussion below makes this precise in the balanced case.

\begin{proposition}[Basic properties]
\label{prop:main-basic-properties}
For a fixed rank-1 projective measurement on the source qubit $i$, the score $K_{i\to j}$ has the following properties:
\begin{enumerate}
    \item $0\le K_{i\to j}\le 1$.
    \item If $p_0p_1>0$, then $K_{i\to j}=0$ if and only if $\rho_{j|0}=\rho_{j|1}$. If $p_0p_1=0$, then $K_{i\to j}=0$ by Definition~\ref{def:pairwiseK_Z}.
    \item If $\rho_{ij}=\rho_i\otimes\rho_j$, then $K_{i\to j}=0$.
    \item $K_{i\to j}$ is invariant under local unitaries on the target qubit $j$.
    \item $K_{i\to j}$ is non-increasing under quantum channels on the target qubit $j$.
    \item For pure two-qubit states, $K_{i\to j}=C(\ket{\psi})^2$ and is independent of the source measurement basis.
\end{enumerate}
\end{proposition}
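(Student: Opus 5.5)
Items 1--4 follow directly from Definition~\ref{def:pairwiseK_Z} and standard fidelity facts, so we handle them first. For item~1, $p_0+p_1=1$ gives $4p_0p_1\le 1$, and $\Fid\in[0,1]$ gives $1-\Fid\in[0,1]$; hence the product lies in $[0,1]$, and the case $p_0p_1=0$ is $0$ by definition. For item~2 with $p_0p_1>0$, the prefactor is strictly positive, so $K=0$ iff $\Fid(\rho_{j|0},\rho_{j|1})=1$. This holds iff the two states are equal, which we take from the standard characterization of equality in Uhlmann fidelity (equivalently, through the lower Fuchs--van de Graaf bound \eqref{eq:fvdg}: $\Fid=1$ forces $\Tdist\le 0$). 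For item~3 we compute $\sigma_{j|b}=\Tr(\Pi_b\rho_i)\,\rho_j$, so whenever both branches occur, $\rho_{j|0}=\rho_{j|1}=\rho_j$ and item~2 applies. For item~4, a target unitary $U$ commutes with $\Pi_b\otimes\id$ and with $\Tr_i$. It therefore leaves $p_b$ unchanged and maps $\rho_{j|b}\mapsto U\rho_{j|b}U^\dagger$, and the result follows from unitary invariance of $\Fid$.

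Item~5 works the same way. A target channel $\mathcal E$ acts as $\id\otimes\mathcal E$, which commutes with conjugation by $\Pi_b\otimes\id$ and with the partial trace over $i$. Trace preservation keeps $p_b$ fixed, and the conditional states become $\mathcal E(\rho_{j|b})$. The data-processing inequality $\Fid(\mathcal E(\rho),\mathcal E(\sigma))\ge\Fid(\rho,\sigma)$ then shows that $1-\Fid$ cannot increase. If $p_0p_1=0$, the score stays $0$ both before and after the channel.

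Item~6 is the only part requiring a computation. Write $\ket\psi=\sum_b\ket{b}\ket{\phi_b}$ with unnormalized $\ket{\phi_b}=(\bra b\otimes\id)\ket\psi$ in the chosen source basis. Then $p_b=\|\phi_b\|^2$, and the conditional states are the pure states $\ket{\phi_b}/\|\phi_b\|$. For pure states the fidelity is $|\langle\phi_0|\phi_1\rangle|^2/(p_0p_1)$, which gives
\begin{equation*}
K=4\bigl(p_0p_1-|\langle\phi_0|\phi_1\rangle|^2\bigr)=4\det G,
\end{equation*}
where $G$ is the Gram matrix of $\{\phi_0,\phi_1\}$. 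This formula also returns $0$ when $p_0p_1=0$, consistent with the definition. The next step is to recognize $G^{T}$ (or its complex conjugate) as the reduced state $\rho_j$ written in the $\{\ket0,\ket1\}$ basis, or, more simply, to observe that $G$ and $\rho_j$ have the same nonzero spectrum. Indeed, if $M$ is the $2\times2$ coefficient matrix of $\ket\psi$, then $\rho_i=MM^\dagger$ and $\rho_j\simeq M^TM^{*}$, so $\det G=|\det M|^2$. Since $C=2|\det M|$, we obtain $K=4|\det M|^2=C^2$. A change of source basis replaces $M$ by $VM$ with $V$ unitary, and $|\det(VM)|=|\det M|$, which gives basis independence.

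The main obstacle is matching $\det G$ to the concurrence convention $C=2|\det M|=2|ad-bc|$ with the correct index placement. It is cleaner to express everything through $\det\rho_i=\det\rho_j=|\det M|^2$ and then use the known identity $C^2=4\det\rho_j$ for pure states.
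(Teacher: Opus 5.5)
Your proof is correct and follows the paper's own proof essentially step for step. Items 1--5 rest, as in the paper, on the bounds on $4p_0p_1$ and $\Fid$, faithfulness of fidelity, unitary invariance and data processing. Item 6 uses the unnormalized conditional kets $(\bra{b}\otimes\id)\ket{\psi}$, the pure-state fidelity, and $K=4\det G$ with $\det G=|\det M|^2=C^2/4$. That is the paper's $K=4\det\rho_i=\tau_{ij}$, written with the coefficient matrix instead. There are two small slips, neither affecting validity. First, the Gram matrix satisfies $G^{\mathsf T}=\rho_i$, the \emph{source} marginal in the measurement basis, not $\rho_j$; this is harmless because you compute $\det G$ directly and $\det\rho_i=\det\rho_j$. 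Second, it is the right-hand Fuchs--van de Graaf inequality $\Tdist\le\sqrt{1-\Fid}$, not the lower one, that forces $\Tdist=0$ when $\Fid=1$.
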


The proofs are given in Appendices~\ref{sec:basic-properties}--\ref{sec:target-monotone}.
The pure-state reduction is special: in mixed states $K$ can be maximal on separable states, so the mixed-state score is not an entanglement measure.
Table~\ref{tab:sanity-checks} gives such a case.

\subsection{Bloch representation and computation}
\label{subsec:bloch-kmax}

The same definition can be evaluated directly from the two-qubit Bloch data.
This gives a closed expression for each measurement direction and, after optimization over directions, for $K_{\max}$.

For an ordered pair $(i,j)$, write the two-qubit reduced state in the standard Pauli expansion, with the source qubit $i$ as the first tensor factor:
\begin{equation}
\rho_{ij}
=
\frac14\left(
\id\otimes\id
+ \vec a\cdot\vec\sigma\otimes\id
+ \id\otimes \vec b\cdot\vec\sigma
+ \sum_{k,\ell\in\{x,y,z\}} T_{k\ell}\,\sigma_k\otimes\sigma_\ell
\right).
\label{eq:bloch-expansion}
\end{equation}
Here $\vec a,\vec b\in\mathbb{R}^3$ are the local Bloch vectors of $i$ and $j$, and $T\in\mathbb{R}^{3\times 3}$ is the correlation tensor:
\[
a_k=\Tr[\rho_{ij}(\sigma_k\otimes\id)],\qquad
b_\ell=\Tr[\rho_{ij}(\id\otimes\sigma_\ell)],\qquad
T_{k\ell}=\Tr[\rho_{ij}(\sigma_k\otimes\sigma_\ell)].
\]
These coordinates are also the natural coordinates for the steering-ellipsoid description of two-qubit states~\cite{JevticSteeringEllipsoids,GamelBlochMatrix}.
For $K_{j\to i}$, the same formulas apply after interchanging the two qubits, which swaps $\vec a$ and $\vec b$ and replaces $T$ by $T^{\mathsf T}$.

Let a rank-1 projective measurement on $i$ be specified by a unit vector $\hat n\in\mathbb{R}^3$:
\[
\Pi_s^{(i)}(\hat n)=\frac12(\id+s\,\hat n\cdot\vec\sigma),\qquad s\in\{+1,-1\}.
\]
For any outcome with nonzero probability,
\begin{equation}
p_s(\hat n)=\frac12(1+s\,\vec a\cdot\hat n),
\qquad
\vec r_s(\hat n)=
\frac{\vec b+s\,T^{\mathsf T}\hat n}{1+s\,\vec a\cdot\hat n},
\label{eq:conditional-bloch}
\end{equation}
where $\rho_{j|s}=\tfrac12(\id+\vec r_s\cdot\vec\sigma)$.
Thus, whenever both branches have nonzero probability,
\begin{equation}
K_{i\to j}^{(\hat n)}
=
\bigl(1-(\vec a\cdot\hat n)^2\bigr)
\left(1-\Fid\bigl(\rho(\vec r_+(\hat n)),\rho(\vec r_-(\hat n))\bigr)\right),
\label{eq:K-bloch-general}
\end{equation}
with $\rho(\vec r)=\tfrac12(\id+\vec r\cdot\vec\sigma)$.
For qubit states, the fidelity in \eqref{eq:K-bloch-general} has the closed form~\cite{JozsaFidelity}
\begin{equation}
\Fid(\rho(\vec r),\rho(\vec u))
=
\frac12\left(
1+\vec r\cdot\vec u
+\sqrt{(1-|\vec r|^2)(1-|\vec u|^2)}
\right).
\label{eq:qubit-fidelity-bloch}
\end{equation}

The basis-optimized score is
\begin{equation}
K^{\max}_{i\to j}(\rho)
:=
\max_{\|\hat n\|=1} K^{(\hat n)}_{i\to j}(\rho).
\label{eq:Kmax-def}
\end{equation}
The maximum is over rank-1 projective measurements on the source qubit $i$.

\begin{proposition}[Locally maximally mixed closed form]
\label{prop:lmm-closed-form}
Suppose the two local marginals are maximally mixed, so $\vec a=\vec b=0$ in \eqref{eq:bloch-expansion}.
Then, for every measurement direction $\hat n$,
\begin{equation}
K_{i\to j}^{(\hat n)} = |T^{\mathsf T}\hat n|^2.
\label{eq:lmm-K-direction}
\end{equation}
Consequently,
\begin{equation}
K^{\max}_{i\to j}(\rho_{ij}) = \sigma_{\max}(T)^2,
\label{eq:lmm-Kmax}
\end{equation}
where $\sigma_{\max}(T)$ is the largest singular value of $T$.
\end{proposition}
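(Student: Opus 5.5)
With $\vec a=\vec b=0$, Eq.~\eqref{eq:conditional-bloch} gives $p_\pm(\hat n)=\tfrac12$ for every direction, so both branches always occur and the prefactor $1-(\vec a\cdot\hat n)^2$ in \eqref{eq:K-bloch-general} equals $1$. The conditional Bloch vectors are antipodal: $\vec r_\pm=\pm\vec v$ with $\vec v:=T^{\mathsf T}\hat n$. The whole proof therefore reduces to evaluating $1-\Fid(\rho(\vec v),\rho(-\vec v))$ with the qubit closed form \eqref{eq:qubit-fidelity-bloch}, and then maximizing over $\hat n$.

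Substituting $\vec r=\vec v$ and $\vec u=-\vec v$ into \eqref{eq:qubit-fidelity-bloch} gives
\[
\Fid=\tfrac12\bigl(1-|\vec v|^2+\sqrt{(1-|\vec v|^2)^2}\bigr)=\tfrac12\bigl(1-|\vec v|^2+|1-|\vec v|^2|\bigr).
\]
The only step needing care is removing the absolute value. For this we use that $\rho_{j|\pm}$ are genuine states, so $|\vec v|=|\vec r_\pm|\le 1$; this holds because $\rho_{j|s}$ is a normalized conditional state whenever $p_s>0$. Hence $\Fid=1-|\vec v|^2$, so $K^{(\hat n)}_{i\to j}=|\vec v|^2=|T^{\mathsf T}\hat n|^2$, which is \eqref{eq:lmm-K-direction}. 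As a check, when $|\vec v|=1$ we get $\Fid=0$, consistent with orthogonal pure branches.

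For \eqref{eq:lmm-Kmax}, the maximum over unit vectors of $|T^{\mathsf T}\hat n|^2=\hat n^{\mathsf T}TT^{\mathsf T}\hat n$ is the largest eigenvalue of the positive semidefinite matrix $TT^{\mathsf T}$, by the Rayleigh quotient. That eigenvalue is $\sigma_{\max}(T^{\mathsf T})^2=\sigma_{\max}(T)^2$, and it is attained at the top left singular vector of $T$. The maximum in \eqref{eq:Kmax-def} exists because the unit sphere is compact.

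I expect no real obstacle. The only subtle point is justifying $|T^{\mathsf T}\hat n|\le 1$ so that the square root simplifies correctly. If a direct argument is preferred, positivity of $\rho_{ij}$ gives $\Tr[\rho_{ij}(\Pi_s\otimes \tfrac12(\id-\hat m\cdot\vec\sigma))]\ge0$ for all unit $\hat m$, and this yields $\hat m^{\mathsf T}T^{\mathsf T}\hat n\le 1$.
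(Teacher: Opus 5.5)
Your proof is correct and follows essentially the same route as the paper: balanced branches with antipodal conditional Bloch vectors $\pm T^{\mathsf T}\hat n$, the qubit fidelity formula simplified using $|T^{\mathsf T}\hat n|\le 1$, and a Rayleigh-quotient maximization of $\hat n^{\mathsf T}TT^{\mathsf T}\hat n$ giving $\sigma_{\max}(T)^2$. Your positivity argument with the projector $\Pi_s\otimes\tfrac12(\id-\hat m\cdot\vec\sigma)$ is a small addition: it proves the bound $|T^{\mathsf T}\hat n|\le 1$ directly, where the paper only asserts it from physicality of the conditional states.
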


\begin{proof}
When $\vec a=\vec b=0$, the two outcomes have $p_+=p_-=\tfrac12$, and \eqref{eq:conditional-bloch} gives
\[
\vec r_+ = T^{\mathsf T}\hat n,\qquad \vec r_-=-T^{\mathsf T}\hat n.
\]
Set $\vec v=T^{\mathsf T}\hat n$.
Using \eqref{eq:qubit-fidelity-bloch},
\[
\Fid(\rho(\vec v),\rho(-\vec v))
=
\frac12\left(1-|\vec v|^2+\sqrt{(1-|\vec v|^2)^2}\right)
=1-|\vec v|^2,
\]
because $|\vec v|\le 1$ for physical conditional states.
Since $4p_+p_-=1$, \eqref{eq:lmm-K-direction} follows.
Maximizing $|T^{\mathsf T}\hat n|^2=\hat n^{\mathsf T}TT^{\mathsf T}\hat n$ over unit vectors gives the largest eigenvalue of $TT^{\mathsf T}$, which is $\sigma_{\max}(T)^2$.
\end{proof}

In the steering-ellipsoid picture, the locally maximally mixed case has an ellipsoid centered at the origin whose semi-axis lengths are the singular values of $T$.
For a fixed direction $\hat n$, the two conditional states are antipodal points on this ellipsoid.
Thus $K^{(\hat n)}$ is the squared radius reached in that direction, equivalently one quarter of the squared chord length between the two steered Bloch vectors.

\begin{proposition}[Continuity and existence of \texorpdfstring{$K_{\max}$}{Kmax}]
\label{prop:kmax-continuity}
For a fixed two-qubit state $\rho_{ij}$, define $K^{(\hat n)}_{i\to j}=0$ at directions where one outcome has probability zero.
Then $K^{(\hat n)}_{i\to j}$ is continuous on the unit sphere, and $K^{\max}_{i\to j}(\rho_{ij})$ exists.
\end{proposition}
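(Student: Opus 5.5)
The plan is to rewrite $K^{(\hat n)}$ in terms of the unnormalized conditional states and show that this form is continuous everywhere. By \eqref{eq:unormalized_conditional_state}, $\sigma_{j|s}(\hat n)=p_s\rho_{j|s}$ has Bloch data $\tfrac14\bigl((1+s\,\vec a\cdot\hat n)\id+(\vec b+sT^{\mathsf T}\hat n)\cdot\vec\sigma\bigr)$. This depends affinely, hence continuously, on $\hat n$, and it is positive semidefinite for every $\hat n$, including directions where $p_s=0$. The Uhlmann fidelity is jointly homogeneous: $\Fid(\alpha\rho,\beta\sigma)=\alpha\beta\,\Fid(\rho,\sigma)$. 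Hence, when $p_+p_->0$,
\[
K^{(\hat n)}_{i\to j}=4p_+p_- - 4\,\Fid\bigl(\sigma_{j|+}(\hat n),\sigma_{j|-}(\hat n)\bigr),
\]
which is the complement identity $K=4p_0p_1-Z^2$ mentioned in the related-work discussion. The quantity $\Fid(\sigma,\tau)=(\Tr|\sqrt\sigma\sqrt\tau|)^2$ extends to all pairs of positive semidefinite operators.

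Next I would check that this formula also gives $0$ when $p_+p_-=0$, so that it agrees with the convention in the statement on the whole sphere. If $p_s=0$, then $\sigma_{j|s}$ is a PSD operator with zero trace, so $\sigma_{j|s}=0$. Then $\Fid(\sigma_{j|+},\sigma_{j|-})=0$ and $4p_+p_-=0$, so the formula returns $0$, which matches the definition.

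Continuity then follows by composition. The map $\hat n\mapsto(\sigma_{j|+},\sigma_{j|-})$ is continuous into pairs of PSD $2\times2$ matrices. The map $\sigma\mapsto\sqrt\sigma$ is continuous on PSD matrices, and this includes the boundary where eigenvalues vanish, since the square root is operator-continuous on $[0,\infty)$. The trace norm is continuous. Hence $\hat n\mapsto\Fid(\sigma_{j|+},\sigma_{j|-})$ is continuous, and $p_+p_-$ is a polynomial in $\hat n$. I expect the main obstacle to be justifying continuity of the matrix square root at singular PSD matrices. It is tempting to argue through the Bloch formula \eqref{eq:qubit-fidelity-bloch} with normalized vectors, but that formula divides by $1\pm\vec a\cdot\hat n$ and breaks down exactly at the degenerate directions.

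To handle this, I would use the unnormalized qubit analogue of \eqref{eq:qubit-fidelity-bloch}. For $2\times2$ PSD $\sigma,\tau$, one has $\Fid(\sigma,\tau)=\Tr(\sigma\tau)+2\sqrt{\det\sigma\,\det\tau}$. This is manifestly continuous, because $\det\ge0$ on PSD matrices and $\sqrt{\cdot}$ is continuous on $[0,\infty)$. It follows by homogeneity from the normalized formula and remains valid when one argument is $0$. Finally, the unit sphere is compact and $K^{(\hat n)}$ is continuous on it, so the extreme value theorem shows that the maximum in \eqref{eq:Kmax-def} is attained. This proves that $K^{\max}_{i\to j}$ exists.
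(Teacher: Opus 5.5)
Your proof is correct, and it takes a different route from the paper's. The paper argues piecewise. On the open set where $p_+p_->0$ it cites continuity of the normalized Bloch formulas \eqref{eq:conditional-bloch}--\eqref{eq:qubit-fidelity-bloch}. At a degenerate direction it uses only the squeeze $0\le K^{(\hat n)}_{i\to j}\le 4p_+p_-\to 0$, which needs nothing beyond $0\le 1-\Fid\le 1$. Compactness then gives the maximum.

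You instead fold the branch probabilities into the unnormalized conditional states, which depend affinely on $\hat n$. Using the degree-$(1,1)$ homogeneity of $\Fid$, you obtain one expression valid on the whole sphere:
$K^{(\hat n)}_{i\to j}=1-(\vec a\cdot\hat n)^2-4\Tr(\sigma_{j|+}\sigma_{j|-})-8\sqrt{\det\sigma_{j|+}\det\sigma_{j|-}}$.
In this form the deterministic-branch convention is no longer a separate case. It follows automatically, because $\sigma_{j|s}=0$ when $p_s=0$; this is the complement identity $K=4p_0p_1-Z^2$ of Appendix~\ref{sec:cq-reliability} extended to the boundary.

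The paper's route is shorter, and its boundary step needs no closed form for the fidelity. Yours gives a manifestly continuous formula with no divisions by $1\pm\vec a\cdot\hat n$. Those divisions are what make the normalized expressions in \eqref{eq:conditional-bloch} and \eqref{eq:fixed-basis-estimator} ill-conditioned near deterministic directions.

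For larger targets, such as the subsystem score of Section~\ref{subsec:variants}, your argument would need joint continuity of $\Fid$ on positive semidefinite operators, including the zero operator, in place of the $2\times2$ determinant formula. That fact is true and standard. The paper's argument needs the analogous continuity on normalized states in the interior.
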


\begin{proof}
On the open set where $p_+p_->0$, the formulas above are compositions of continuous functions.
If one branch probability tends to zero, then $4p_+p_-\to 0$, while $0\le 1-\Fid(\rho_{j|+},\rho_{j|-})\le 1$.
Therefore $K^{(\hat n)}_{i\to j}\to 0$, matching the deterministic-branch definition.
The unit sphere is compact, so the maximum exists.
\end{proof}

A practical computation of $K_{\max}$ can therefore work directly with $(\vec a,\vec b,T)$.
One evaluates \eqref{eq:K-bloch-general} on the sphere, using \eqref{eq:qubit-fidelity-bloch} for each direction, and maximizes the resulting scalar function.
For general states this is a two-dimensional optimization problem on a compact domain.
A numerical implementation should combine a deterministic spherical grid with multi-start local optimization in angular coordinates or on normalized three-vectors.
The grid gives a reproducible lower bound on $K_{\max}$, while the local optimizer refines the best candidates.
For locally maximally mixed states, Proposition~\ref{prop:lmm-closed-form} gives an exact check on the implementation.

\paragraph{Basis dependence and directionality}

The probabilities $p_0, p_1$ depend on the chosen projective measurement basis (Section~\ref{subsec:conditional_states}), so $K_{i\to j}$ is basis-dependent by design. The prefactor $4p_0p_1$ records whether that basis produces balanced branching.

Basis dependence makes $K^{(\hat n)}$ \emph{covariant}, not invariant, under unitaries on the source: conjugating the state by $U_i$ is equivalent to counter-rotating the measurement direction,
\begin{equation}
K^{(\hat n)}_{i\to j}\Big((U_i\otimes\id_j)\,\rho_{ij}\,(U_i^\dagger\otimes\id_j)\Big)
=
K^{(R_{U_i}^{-1}\hat n)}_{i\to j}(\rho_{ij}),
\label{eq:source-covariance}
\end{equation}
where $R_{U}\in SO(3)$ is the Bloch rotation implemented by $U$.
A fixed-basis layer therefore changes under source-frame changes unless the basis is co-rotated, while the basis-optimized $K^{\max}_{i\to j}$ of \eqref{eq:Kmax-def} is invariant under local unitaries on either qubit.
Note also that for locally maximally mixed states $K^{\max}_{i\to j}=K^{\max}_{j\to i}$, because $T$ and $T^{\mathsf T}$ have the same singular values (Proposition~\ref{prop:lmm-closed-form}); in that regime directional structure is a property of the fixed-basis layers, which is one reason we treat $K^{(\hat n)}$, not $K^{\max}$, as the primary object.

In general, $K_{i\to j}\neq K_{j\to i}$: measuring $i$ and examining $j$ need not yield the same score as measuring $j$ and examining $i$.
The asymmetry is determined by the two conditional ensembles induced by the chosen source-target ordering.
The resulting directed-graph structure is intentional; $K_{i\to j}$ records directed conditional dependence from $i$ to $j$, not causal influence.

As summarized in Proposition~\ref{prop:main-basic-properties} and proved in Appendix~\ref{sec:pure-twoqubit}, this asymmetry disappears for pure two-qubit states because $K$ reduces to the tangle.
This does not extend to arbitrary pure many-qubit states after tracing out the rest of the system.
Example~\ref{ex:asymmetry} gives a computational-basis counterexample.

\paragraph{What $K$ measures}

$K_{i\to j}$ is \emph{not} an entanglement measure. Entanglement measures vanish on all separable states, while $K$ can be nonzero, and indeed maximal, for certain separable mixed states where classical correlations are present (Table~\ref{tab:sanity-checks}). It quantifies \emph{measurement-conditioned dependence}: how much measuring $i$ changes the conditional state of $j$ in the chosen basis.
For the same reason, $K$ is not discord-like in the resource-theoretic sense: a fully classical state can saturate $K$ in its correlated basis while having zero discord there.
The closest comparators are the fixed-measurement quantities built from the same conditional ensemble, such as the one-way classical correlation $\chi^{(\hat n)}$ of \eqref{eq:chi-def} and conditional reduced-state distance measures~\cite{HendersonVedral,OllivierZurekDiscord,GuoAvgDist,KhalidMbQC}; see Section~\ref{subsec:related} and the quantitative comparison in Section~\ref{subsec:comparison}.
Since $K$ is built from fidelity, it inherits invariance under local unitaries on the target qubit and monotonicity under quantum channels on the target, as summarized in Proposition~\ref{prop:main-basic-properties}.
In other words, the conditional-state distinguishability that $K$ captures does not depend on the local reference frame of $j$, and noise on the target can only reduce the score.
We stress the limited domain of this monotonicity: it holds for channels on the \emph{target only}, with the source measurement fixed.
It is not an LOCC monotonicity, and preprocessing or noise on the measured source qubit changes the branching probabilities and the induced ensemble, so the score need not decrease in that direction (Appendix~\ref{sec:target-monotone}).

Relative to broader measurement-based notions such as quantum steering and measurement-induced nonlocality~\cite{UolaSteering,LuoMIN}, $K_{i\to j}$ keeps only the pairwise impact of measuring $i$ on a specified target qubit $j$ in the chosen basis.

\paragraph{Pure-state reduction}

When the two-qubit reduced state is pure, $\rho_{ij}=\ket{\psi}\!\bra{\psi}$ (that is, when qubits $i$ and $j$ are not entangled with the rest of the system or its environment), $K$ coincides with the two-qubit Coffman--Kundu--Wootters tangle $\tau_{ij}$~\cite{CKWTangle} via the concurrence squared $C(\ket{\psi})^2$~\cite{WoottersConcurrence}, and becomes independent of the measurement basis.
In this regime, $K$ reduces to a standard entanglement measure; Appendix~\ref{sec:pure-twoqubit} gives the proof.

\paragraph{Operational interpretation}

In the balanced case $p_0=p_1=\tfrac12$, the same construction has a direct discrimination interpretation.
Suppose an observer has access only to qubit $j$ and wants to guess which outcome $b\in\{0,1\}$ occurred on qubit $i$.
Appendix~\ref{sec:operational} shows that Helstrom discrimination and the Fuchs--van de Graaf inequalities give
\begin{equation}
1 - \frac{1}{2}\sqrt{1-K_{i\to j}}
\;\le\;
P_{\mathrm{succ}}^\star
\;\le\;
\frac{1}{2}\big(1+\sqrt{K_{i\to j}}\big),
\end{equation}
where $P_{\mathrm{succ}}^\star$ is the optimal probability of correctly guessing $b$ from a measurement on $j$.
Thus, for balanced branches, $K_{i\to j}=0$ corresponds to random guessing and $K_{i\to j}=1$ to perfect inference.
For imbalanced priors, $K$ alone should not be interpreted as the guessing probability; the priors $p_0,p_1$ must also be retained.

\subsection{Choice of metric: fidelity versus trace distance}
\label{subsec:metric-choice}

Definition~\ref{def:pairwiseK_Z} measures conditional-state distinguishability by the fidelity gap $1-\Fid$.
A natural variant uses the squared trace distance instead:
\begin{equation}
K^{\mathrm{tr}}_{i\to j} \;:=\; 4p_0p_1\,\Tdist(\rho_{j|0},\rho_{j|1})^2 .
\label{eq:Ktr-def}
\end{equation}
The two scores share every structural property established in this paper: boundedness, the characterization of $K=0$, invariance under target local unitaries, monotonicity under target channels (the trace distance is also contractive under CPTP maps~\cite{WatrousQIT}), the pure two-qubit reduction to the tangle (for pure conditional states $\Tdist^2=1-\Fid$), and the locally-maximally-mixed closed form of Proposition~\ref{prop:lmm-closed-form} (for antipodal Bloch vectors $\pm\vec v$ both gaps equal $|\vec v|^2$).
They are ordered, $K^{\mathrm{tr}}_{i\to j}\le K_{i\to j}$, and sandwiched by the Fuchs--van de Graaf inequalities as in \eqref{eq:K-sandwich}.
In the balanced case the trace-distance variant is exactly calibrated to discrimination, $P^\star_{\mathrm{succ}}=\tfrac12\big(1+\sqrt{K^{\mathrm{tr}}_{i\to j}}\big)$, whereas $K$ yields the two-sided bounds of \eqref{eq:Psucc-vs-K}.
Moreover, whenever the two conditional states have equal Bloch radii $|\vec r_+|=|\vec r_-|$ (equal purity), Eq.~\eqref{eq:qubit-fidelity-bloch} gives $1-\Fid=\Tdist^2$ exactly and the two scores coincide; the QAOA edges in Table~\ref{tab:comparison} are an example, while the Grover-after-oracle ensemble has branches of unequal purity and $K=2K^{\mathrm{tr}}$ there.

We adopt the fidelity version because $1-\Fid$ is the squared purified distance: this is what yields the exact complement identity with the classical--quantum reliability parameter, $K=4p_0p_1-Z(X|B)^2$ (Section~\ref{subsec:related}, Appendix~\ref{sec:cq-reliability}), connecting $K$ to a quantity with an established operational role in classical--quantum polar coding~\cite{RenesPolarCoding,WildeGuha}; no equally direct trace-distance identity is available.
The choice is otherwise inessential to the proposal: a reader who prefers exact Helstrom calibration can use $K^{\mathrm{tr}}$, obtaining $\Tdist=\tfrac12|\vec r_+-\vec r_-|$ from the same conditional Bloch vectors, with every network construction, figure, and estimator in this paper unchanged.

\subsection{Estimating \texorpdfstring{$K$}{K} from measurement data}
\label{subsec:estimating-K}

In a fixed measurement basis, the Bloch formulas specify the Pauli expectations needed to evaluate an edge.
For a Pauli measurement axis $\mu\in\{x,y,z\}$ on the source qubit $i$, let $s\in\{+1,-1\}$ label the two eigenvalue outcomes of $\sigma_\mu$.
Write
\[
m_i^{(\mu)}=\langle \sigma_\mu^{(i)}\rangle,\qquad
b_k=\langle \sigma_k^{(j)}\rangle,\qquad
c_{\mu k}^{(ij)}=\langle \sigma_\mu^{(i)}\otimes\sigma_k^{(j)}\rangle.
\]
Then the branch probabilities and target conditional Bloch components are
\begin{equation}
p_s=\frac12(1+s\,m_i^{(\mu)}),
\qquad
r_{s,k}=
\frac{b_k+s\,c_{\mu k}^{(ij)}}{1+s\,m_i^{(\mu)}},
\qquad k\in\{x,y,z\},
\label{eq:fixed-basis-estimator}
\end{equation}
whenever $p_s>0$.
The value of $K^{(\mu)}_{i\to j}$ is then obtained by inserting the two vectors $\vec r_+$ and $\vec r_-$ into the qubit fidelity formula \eqref{eq:qubit-fidelity-bloch}.
For example, a $Z$-basis edge $K^{(Z)}_{i\to j}$ requires, for that ordered pair, the source expectation $\langle Z_i\rangle$, the three target expectations $\langle X_j\rangle,\langle Y_j\rangle,\langle Z_j\rangle$, and the three correlators $\langle Z_iX_j\rangle,\langle Z_iY_j\rangle,\langle Z_iZ_j\rangle$.
This is less information than full two-qubit tomography for a fixed basis-resolved edge, although different pairs and different bases still require a measurement design that covers the corresponding Pauli products.

For many pairs or many basis choices, one can also estimate the required local Pauli expectations and two-point correlators from a shared randomized-measurement dataset.
Classical-shadow protocols are designed for this setting: a single dataset can be used to predict many observables, including local observables and two-point correlators, with sample complexity depending logarithmically on the number of target observables under the usual shadow-norm assumptions~\cite{HuangClassicalShadows}.
For $K$, this should be treated as a plug-in estimator: the measured correlators determine $\hat p_s$ and $\hat{\vec r}_s$, and the nonlinear fidelity formula is applied afterward.
Near deterministic branches, the denominators in \eqref{eq:fixed-basis-estimator} can amplify statistical error; in that regime the prefactor $4p_+p_-$ also suppresses the final score, and uncertainty should still be reported.
Because $K$ is a nonlinear function of the estimated correlators, the plug-in estimate is biased at finite shot counts, and empirical conditional Bloch vectors can fall slightly outside the Bloch ball, where \eqref{eq:qubit-fidelity-bloch} is no longer well defined.
A practical analysis should therefore (i) project any empirical vector with $|\hat{\vec r}_s|>1$ back to the unit ball before evaluating the fidelity, (ii) report bootstrap confidence intervals over measurement shots for each edge, not only point estimates, and (iii) declare edges with $\hat p_s$ below a stated threshold as $K=0$, consistent with the deterministic-branch convention, and avoid evaluating the unstable ratio in \eqref{eq:fixed-basis-estimator}.

\section{Examples}\label{sec:examples}

We illustrate the score $K$ on small states chosen to show when it vanishes or saturates, how it depends on the measurement basis, and how directionality can arise in pairwise reduced states.
Table~\ref{tab:sanity-checks} collects basic checks on the definition.
Here $\ket{\Phi^+}=(\ket{00}+\ket{11})/\sqrt2$ and
\begin{equation}
\rho_{\mathrm{cl}}=\tfrac12\Big(\ket{00}\!\bra{00}+\ket{11}\!\bra{11}\Big).
\label{eq:classicalcorr}
\end{equation}

\begin{table}[ht]
\centering
\small
\setlength{\tabcolsep}{4pt}
\renewcommand{\arraystretch}{1.2}
\begin{tabular}{p{0.24\linewidth}p{0.13\linewidth}p{0.17\linewidth}p{0.34\linewidth}}
\hline
State or pair & Basis & Value & Main point \\
\hline
$\rho_{ij}=\rho_i\otimes\rho_j$ & any & $0$ & Conditioning on $i$ leaves the target state unchanged; deterministic branches give $K=0$ by Definition~\ref{def:pairwiseK_Z}. \\
$\ket{0}_1\otimes\ket{\Phi^+}_{23}$ & $Z$ & \begin{tabular}[t]{@{}l@{}}$K_{1\to2}=0$\\$K_{2\to3}=1$\end{tabular} & The score detects the Bell pair and ignores the deterministic source qubit. \\
$\ket{W_3}$, $1\to2$ & $Z$ & $4/9$ & Imbalanced branches and partial distinguishability give an intermediate value. \\
$\ket{\mathrm{GHZ}_N}$, $i\to j$ & $Z$ & $1$ & Every pair has balanced, orthogonal conditional branches. \\
$\ket{\mathrm{GHZ}_N}$, $i\to j$ & $X$ & $0$ for $N\ge3$ & For $N\ge3$, the displayed GHZ correlations are computational-basis correlations. \\
$\rho_{\mathrm{cl}}$, $i\to j$ & $Z$ & $1$ & A separable mixed state can have maximal $K$. \\
$\rho_{\mathrm{cl}}$, $i\to j$ & $X$ & $0$ & The same state can give zero in another basis, motivating $K^{\max}_{i\to j}$. \\
\hline
\end{tabular}
\caption{Sanity checks for $K$ on small states.}
\label{tab:sanity-checks}
\end{table}

For the cases in the table, one value is enough to show the behavior.
Directionality requires comparing both orientations, so the next example computes them explicitly.

\begin{example}[Measurement Asymmetry]\label{ex:asymmetry}

Consider the global 4-qubit pure state
\begin{equation}
    \ket{\Psi}= \frac{1}{2}\Big[ \ket{0000} + \ket{0100} + \ket{1001} + \ket{1111}\Big].
\end{equation}
For $K^{(Z)}_{1\to 2}$, we obtain $p_0 = p_1 = \frac{1}{2}$, with $\rho_{2|0}=\ket{+}\!\bra{+}$ and $\rho_{2|1}= \frac{\mathbb{I}}{2}$.
Since $\Fid(\ket{+}\!\bra{+},\mathbb{I}/2)=\tfrac12$, this gives $K^{(Z)}_{1\to 2}=\frac{1}{2}$.

For $K^{(Z)}_{2\to 1}$, we obtain $p_0 = p_1 = \frac{1}{2}$, with $\rho_{1|0}=\rho_{1|1}= \frac{\mathbb{I}}{2}$.
Hence $K^{(Z)}_{2\to 1}=0$.
Thus directionality can appear in a pairwise reduced state even when the global state is pure: measuring qubit~1 changes the conditional state of qubit~2, while measuring qubit~2 leaves the conditional state of qubit~1 unchanged.

\end{example}

Figure~\ref{fig:asymmetry-K} displays the directed $K^{(Z)}$-network for the pure four-qubit state in Example~\ref{ex:asymmetry}.

\begin{figure}[ht]
\centering
\begin{tikzpicture}[
    >=Stealth,
    qubit/.style={draw, circle, minimum size=0.9cm, thick, fill=blue!15},
    muted qubit/.style={draw=gray!55, circle, minimum size=0.9cm, thick, fill=gray!12},
    context edge/.style={->, thin, gray!32, bend left=12},
    highlighted edge/.style={->, very thick, red!70!black, bend left=20},
    zero edge/.style={->, thin, gray!60, dashed, bend left=20},
    edge label/.style={font=\scriptsize, fill=white, inner sep=1.2pt, text=black}
]

\coordinate (q1) at (0,2.2);
\coordinate (q2) at (3.2,2.2);
\coordinate (q3) at (3.2,0);
\coordinate (q4) at (0,0);

\draw[context edge] (q1) to (q3); 
\draw[context edge] (q3) to (q1); 
\draw[context edge] (q1) to (q4); 
\draw[context edge] (q4) to (q1); 
\draw[context edge] (q2) to (q3); 
\draw[context edge] (q3) to (q2); 
\draw[zero edge] (q2) to (q4); 
\draw[context edge] (q4) to (q2); 
\draw[context edge] (q3) to (q4); 
\draw[context edge] (q4) to (q3); 

\draw[highlighted edge] (q1) to node[edge label, above] {$K^{(Z)}_{1\to2}=1/2$} (q2);
\draw[zero edge] (q2) to node[edge label, below] {$K^{(Z)}_{2\to1}=0$} (q1);

\node[qubit] at (q1) {1};
\node[qubit] at (q2) {2};
\node[muted qubit] at (q3) {3};
\node[muted qubit] at (q4) {4};

\end{tikzpicture}
\caption{Directed $K^{(Z)}$-network for the four-qubit state in Example~\ref{ex:asymmetry}.
The highlighted pair has $K^{(Z)}_{1\to2}=1/2$ and $K^{(Z)}_{2\to1}=0$; the remaining arrows give the surrounding pairwise context.}
\label{fig:asymmetry-K}
\end{figure}

\section{\texorpdfstring{$K$}{K}-Networks in Circuits}
\label{sec:applications}

We apply $K$-networks to three circuit settings: Grover search and QAOA, where the pairwise scores form basis-conditioned correlation graphs through the circuit, and a teleportation circuit with deferred measurements, where the network is genuinely directed.
We close the section with a quantitative comparison between $K^{(Z)}$ and the closest existing quantities on these examples, followed by an edge-recovery benchmark on random circuit families.
Unless stated otherwise, the circuit examples use $K^{(Z)}$ and zero-based qubit labels matching the simulation convention.

\subsection{\texorpdfstring{$K$}{K}-networks: from pairwise scores to graphs}

Once each ordered pair has a score, the circuit state has a directed weighted graph representation.
Given an $N$-qubit state $\rho(t)$ at circuit depth/time $t$, one may construct a directed weighted graph $G(t)$ by taking the qubits $\{1,2,\ldots,N\}$ as nodes, assigning each directed edge $i\to j$ the weight $K^{(\hat n)}_{i\to j}(\rho(t))$ for a chosen measurement basis $\hat n$, and collecting these weights into the adjacency matrix $A_{ij}(t) = K^{(\hat n)}_{i\to j}(\rho(t))$.

This construction complements existing multiplex frameworks based on two-qubit RDM-derived quantities~\cite{GarciaPerezPTN} and mutual-information network constructions~\cite{ValdezMI,JonesSmallWorld}.
Because $K$ is explicitly measurement-conditioned and directional, it is suited to settings where a preferred measurement basis exists, such as the computational basis used by many NISQ algorithms.

By evaluating $G(t)$ at successive circuit depths, one can visualize how measurement-conditioned correlations build up or dissipate during the execution of a quantum algorithm, or while changing gate parameters.
Edge weights near 1 indicate that measuring the source qubit makes the target's conditional states nearly distinguishable, while edge weights near 0 indicate little measurement-conditioned change in the target state.

\subsection{Grover's search algorithm}

Grover's algorithm~\cite{Grover96} searches an unstructured database by repeated application of an oracle $O$ (marking the target state) and a diffuser $D$ (amplifying marked amplitudes).
Figure~\ref{fig:grover-K} shows how the $K$-network evolves through the stages of one Grover iteration for a 3-qubit search with target $\ket{101}$.
We use
\[
O_{101}=\id-2\ket{101}\!\bra{101},
\qquad
D=2\ket{+}^{\otimes 3}\!\bra{+}^{\otimes 3}-\id,
\]
with exact statevector simulation.

After the Hadamard layer, the uniform superposition has $K=0$ for all pairs: measuring any qubit provides no information about the others.
After the oracle, phase marking creates $K=0.5$ between all qubit pairs.
The oracle flips the phase of the target state, which changes the conditional states even though the computational-basis outcome probabilities are unchanged ($P(\text{target})=1/8$).
After the diffuser, these correlations are redistributed and the pairwise score drops to $K=1/8=0.125$, while the target probability increases to $P(\text{target})=25/32\approx 0.78$.

This stage-by-stage view distinguishes the role of the oracle from the role of the diffuser: the oracle creates measurement-conditioned correlations, and the diffuser converts the marked phase structure into amplitude concentration.
The $K$-network at each stage is uniform across all qubit pairs, reflecting the symmetric structure of both the oracle and diffuser.

\begin{figure}[ht]
\centering
\begin{tikzpicture}[
    >=Stealth,
    qubit/.style={draw, circle, minimum size=0.7cm, thick, fill=blue!15, font=\small},
]

\node[above, font=\footnotesize\bfseries] at (0.75, 2.0) {(a) After Hadamard};
\node[below, font=\scriptsize] at (0.75, -0.5) {$K = 0.00$};
\node[qubit] (p0q0) at (0.75,1.5) {0};
\node[qubit] (p0q1) at (0.0,0) {1};
\node[qubit] (p0q2) at (1.5,0) {2};

\node[above, font=\footnotesize\bfseries] at (3.95, 2.0) {(b) After Oracle};
\node[below, font=\scriptsize] at (3.95, -0.5) {$K = 0.50$};
\node[qubit] (p1q0) at (3.95,1.5) {0};
\node[qubit] (p1q1) at (3.2,0) {1};
\node[qubit] (p1q2) at (4.7,0) {2};

\draw[->, line width=1.35pt, red!59!black, bend left=20] (p1q0) to (p1q1);
\draw[->, line width=1.35pt, red!59!black, bend left=20] (p1q0) to (p1q2);
\draw[->, line width=1.35pt, red!59!black, bend left=20] (p1q1) to (p1q0);
\draw[->, line width=1.35pt, red!59!black, bend left=20] (p1q1) to (p1q2);
\draw[->, line width=1.35pt, red!59!black, bend left=20] (p1q2) to (p1q0);
\draw[->, line width=1.35pt, red!59!black, bend left=20] (p1q2) to (p1q1);

\node[above, font=\footnotesize\bfseries] at (7.15, 2.0) {(c) After Diffuser};
\node[below, font=\scriptsize] at (7.15, -0.5) {$K = 0.125$};
\node[qubit] (p2q0) at (7.15,1.5) {0};
\node[qubit] (p2q1) at (6.4,0) {1};
\node[qubit] (p2q2) at (7.9,0) {2};

\draw[->, line width=0.49pt, red!29!black, bend left=20] (p2q0) to (p2q1);
\draw[->, line width=0.49pt, red!29!black, bend left=20] (p2q0) to (p2q2);
\draw[->, line width=0.49pt, red!29!black, bend left=20] (p2q1) to (p2q0);
\draw[->, line width=0.49pt, red!29!black, bend left=20] (p2q1) to (p2q2);
\draw[->, line width=0.49pt, red!29!black, bend left=20] (p2q2) to (p2q0);
\draw[->, line width=0.49pt, red!29!black, bend left=20] (p2q2) to (p2q1);

\end{tikzpicture}
\caption{Evolution of the $K$-network through one Grover iteration (3 qubits, target $\ket{101}$).
(a)~After Hadamard: $K = 0$ (uniform superposition has no measurement-induced correlations).
(b)~After Oracle: $K = 0.5$ (peak correlations from phase marking).
(c)~After Diffuser: $K=1/8=0.125$ (correlations redistributed as target amplitude is amplified).}
\label{fig:grover-K}
\end{figure}

\paragraph{Comparison with classical mutual information.}
Classical mutual information $I(Z_i{:}Z_j)$ computed from the $Z$-basis outcome distribution remains zero after the oracle: the marginal and joint outcome probabilities are unchanged (each computational basis state still has probability $1/8$).
$K^{(Z)}$ jumps from $0$ to $0.5$ because the oracle creates phase structure that makes the conditional states $\rho_{j|0}$ and $\rho_{j|1}$ distinguishable, even though outcome statistics are unaffected.
This shows that $K$ can detect measurement-conditioned correlations carried by phase structure, which fixed-basis classical correlators do not see.
We emphasize the experimental point (Section~\ref{subsec:estimating-K}): this sensitivity is not available from more samples of the same computational-basis distribution; estimating $K^{(Z)}$ requires the cross-basis correlators $\langle Z_iX_j\rangle,\langle Z_iY_j\rangle$, and here the oracle's imprint appears as $|\langle Z_iX_j\rangle|=1/2$ (Table~\ref{tab:comparison}).

\subsection{QAOA on a ring graph}

Consider the Quantum Approximate Optimization Algorithm (QAOA)~\cite{FarhiQAOA} applied to MaxCut on a 4-qubit ring graph with edges $\{(0,1),(1,2),(2,3),(3,0)\}$.
A single QAOA layer ($p=1$) is applied to the initial state $\ket{+}^{\otimes 4}$ as
\[
U(\gamma,\beta)
=
\exp\!\left(-i\beta\sum_i X_i\right)
\exp\!\left(-i\gamma\sum_{\langle i,j\rangle} Z_iZ_j\right).
\]
In the circuit convention used for the simulation, this is implemented by applying $R_{ZZ}(2\gamma)$ on each ring edge, followed by $R_X(2\beta)$ on each qubit.

Figure~\ref{fig:qaoa-K} shows the $K$-network for $\gamma=0.4$ and $\beta=0.2$, computed by exact statevector simulation.
For these parameters, ring-neighbor pairs have larger $K$ values ($K=0.2498$, shown as $0.25$), while the two diagonal pairs have smaller values ($K=0.0378$, shown as $0.04$).
The directed-edge notation reflects the operational asymmetry of measurement: $K_{i\to j}$ uses $i$ as the measured source and $j$ as the target.
For this symmetric circuit, the same operations are applied across the qubits (the preparation gates and the nearest-neighbor $R_{ZZ}$ and $R_X$ layers), and the computed scores satisfy $K_{i\to j}=K_{j\to i}$.
The basis dependence still distinguishes $K$ from symmetric measures like mutual information, as discussed below; Section~\ref{subsec:teleportation} shows a circuit where the pairwise symmetry itself is genuinely broken.

\begin{figure}[ht!]
\centering
\begin{tikzpicture}[
    >=Stealth,
    qubit/.style={draw, circle, minimum size=0.9cm, thick, fill=blue!15},
    ringedge/.style={->, line width=0.85pt, red!47!black, bend left=12},
    diagonaledge/.style={->, line width=0.38pt, red!32!black, bend left=12},
]

\node[qubit] (q0) at (0,3.0) {0};
\node[qubit] (q1) at (3.0,3.0) {1};
\node[qubit] (q2) at (3.0,0) {2};
\node[qubit] (q3) at (0,0) {3};

\draw[ringedge] (q0) to node[auto, font=\tiny, inner sep=1pt] {0.25} (q1);
\draw[diagonaledge] (q0) to node[auto, font=\tiny, inner sep=1pt] {0.04} (q2);
\draw[ringedge] (q0) to (q3);
\draw[ringedge] (q1) to (q0);
\draw[ringedge] (q1) to (q2);
\draw[diagonaledge] (q1) to (q3);
\draw[diagonaledge] (q2) to (q0);
\draw[ringedge] (q2) to (q1);
\draw[ringedge] (q2) to (q3);
\draw[ringedge] (q3) to (q0);
\draw[diagonaledge] (q3) to (q1);
\draw[ringedge] (q3) to (q2);
\end{tikzpicture}
\caption{$K$-network for a 4-qubit QAOA circuit ($p=1$, $\gamma=0.4$, $\beta=0.2$) on a ring MaxCut problem.
Edge thickness and color indicate $K_{i\to j}$ values; arrows show directionality, and the two displayed labels give representative values.
Ring neighbors (0--1, 1--2, 2--3, 3--0) have $K \approx 0.25$, while diagonal pairs (0--2, 1--3) have $K \approx 0.04$.}
\label{fig:qaoa-K}
\end{figure}

\paragraph{Comparison with mutual information.}
Figure~\ref{fig:qaoa-K-vs-MI} compares the $K$-network with the mutual-information (MI) network for the same QAOA state.
The MI network uses undirected edges weighted by $I(i{:}j) = S(\rho_i) + S(\rho_j) - S(\rho_{ij})$, a standard measure of total correlations.

The $K$-network distinguishes ring neighbors ($K=0.2498$) from diagonal pairs ($K=0.0378$), while the MI network is nearly uniform: $I=0.417$ bits on ring edges and $I=0.404$ bits on diagonal pairs.
This difference arises because $K$ is \emph{basis-conditioned}: it captures correlations visible in the computational basis, which aligns with the problem structure ($ZZ$ interactions on ring edges).
MI, being basis-independent, aggregates all correlations regardless of how they manifest under measurement.

The two layers play complementary roles: MI provides a basis-agnostic summary of total correlations, while $K$ shows which correlations appear under computational-basis measurements.
For algorithm analysis where the readout basis is fixed, $K$ can provide more direct structural information.

\begin{figure}[ht]
\centering
\begin{tikzpicture}[
    >=Stealth,
    qubit/.style={draw, circle, minimum size=0.8cm, thick, fill=blue!15, font=\small},
]

\begin{scope}[local bounding box=knet]
\node[above, font=\small\bfseries] at (0.75, 2.1) {(a) $K$-network};
\node[qubit] (k0) at (0,1.5) {0};
\node[qubit] (k1) at (1.5,1.5) {1};
\node[qubit] (k2) at (1.5,0) {2};
\node[qubit] (k3) at (0,0) {3};

\draw[->, line width=0.85pt, red!47!black, bend left=12] (k0) to (k1);
\draw[->, line width=0.28pt, red!32!black, bend left=12] (k0) to (k2);
\draw[->, line width=0.85pt, red!47!black, bend left=12] (k0) to (k3);
\draw[->, line width=0.85pt, red!47!black, bend left=12] (k1) to (k0);
\draw[->, line width=0.85pt, red!47!black, bend left=12] (k1) to (k2);
\draw[->, line width=0.28pt, red!32!black, bend left=12] (k1) to (k3);
\draw[->, line width=0.28pt, red!32!black, bend left=12] (k2) to (k0);
\draw[->, line width=0.85pt, red!47!black, bend left=12] (k2) to (k1);
\draw[->, line width=0.85pt, red!47!black, bend left=12] (k2) to (k3);
\draw[->, line width=0.85pt, red!47!black, bend left=12] (k3) to (k0);
\draw[->, line width=0.28pt, red!32!black, bend left=12] (k3) to (k1);
\draw[->, line width=0.85pt, red!47!black, bend left=12] (k3) to (k2);
\end{scope}

\begin{scope}[xshift=4cm, local bounding box=minet]
\node[above, font=\small\bfseries] at (0.75, 2.1) {(b) MI-network};
\node[qubit] (m0) at (0,1.5) {0};
\node[qubit] (m1) at (1.5,1.5) {1};
\node[qubit] (m2) at (1.5,0) {2};
\node[qubit] (m3) at (0,0) {3};

\draw[-, line width=2.50pt, blue!100!black] (m0) -- (m1);
\draw[-, line width=2.46pt, blue!98!black] (m0) -- (m2);
\draw[-, line width=2.50pt, blue!99!black] (m0) -- (m3);
\draw[-, line width=2.50pt, blue!99!black] (m1) -- (m2);
\draw[-, line width=2.46pt, blue!98!black] (m1) -- (m3);
\draw[-, line width=2.50pt, blue!99!black] (m2) -- (m3);
\end{scope}

\node[below, font=\scriptsize, align=center] at (0.75, -0.6) {Directed edges\\$K_{i\to j}$};
\node[below, font=\scriptsize, align=center] at (4.75, -0.6) {Undirected edges\\$I(i{:}j)$};
\end{tikzpicture}
\caption{Comparison of $K$-network and MI-network for the same 4-qubit QAOA state.
(a) The $K$-network (directed, red edges) distinguishes ring neighbors from diagonal pairs.
(b) The MI-network (undirected, blue edges) shows nearly uniform weights across all pairs.
Edge thickness is proportional to $K$ or $I$; MI is normalized to its maximum value for visualization.}
\label{fig:qaoa-K-vs-MI}
\end{figure}

\subsection{Directed edges from deferred measurements: teleportation}
\label{subsec:teleportation}

The Grover and QAOA circuits above are symmetric, and the computed scores satisfy $K_{i\to j}=K_{j\to i}$.
In those examples, arrows record the source-target convention used to compute each conditional ensemble.
Mid-circuit measurements are the setting where direction becomes operational: the measured qubit is a genuine source, and the conditional states of the remaining qubits are exactly what subsequent feed-forward acts on.
We illustrate this with the standard teleportation circuit, written in deferred-measurement form.

Qubit $0$ carries the input state $\ket{\psi}$; qubits $1$ and $2$ hold a Bell pair (Hadamard on $1$, CNOT $1\to2$); the Bell-basis rotation applies CNOT $0\to1$ followed by a Hadamard on $0$.
In the protocol, qubits $0$ and $1$ are then measured in the $Z$ basis, and the outcomes are fed forward to qubit $2$: outcome $b_1$ of qubit $1$ triggers the correction $X^{b_1}$ and outcome $b_0$ of qubit $0$ the correction $Z^{b_0}$.
The pre-measurement state is
\begin{equation}
\ket{\Phi}
=
\frac12 \sum_{b_0,b_1\in\{0,1\}} \ket{b_0}\ket{b_1}\otimes X^{b_1}Z^{b_0}\ket{\psi}.
\label{eq:teleport-state}
\end{equation}

All six directed $Z$-basis edges of $\ket{\Phi}$ have closed forms.
Every single-qubit measurement is balanced ($p_0=p_1=\tfrac12$; the marginal of qubit $2$ is maximally mixed because the four correction branches twirl $\ket{\psi}$).
Reading the conditional ensembles off \eqref{eq:teleport-state}, measuring qubit $0$ leaves qubit $1$ with the conditional Bloch vectors $(\pm\braket{X}_\psi,0,0)$ and qubit $2$ with the branch-averaged vectors $(\pm\braket{X}_\psi,0,0)$; measuring qubit $1$ (or $2$) leaves the other with $(0,0,\pm\braket{Z}_\psi)$; and the two conditional states of qubit $0$ are identical for either choice of source.
With the equal-radius case of \eqref{eq:qubit-fidelity-bloch} (where $1-\Fid$ equals the squared half-chord, Section~\ref{subsec:metric-choice}),
\begin{equation}
K^{(Z)}_{0\to1}=K^{(Z)}_{0\to2}=\braket{X}_\psi^2,
\qquad
K^{(Z)}_{1\to2}=K^{(Z)}_{2\to1}=\braket{Z}_\psi^2,
\qquad
K^{(Z)}_{1\to0}=K^{(Z)}_{2\to0}=0 .
\label{eq:teleport-K}
\end{equation}
Figure~\ref{fig:teleport-K} shows the network for the input $\ket{\psi}=R_Y(\pi/3)\ket{0}$, where $\braket{X}_\psi^2=3/4$ and $\braket{Z}_\psi^2=1/4$.

\begin{figure}[ht]
\centering
\begin{tikzpicture}[
    >=Stealth,
    qubit/.style={draw, circle, minimum size=0.9cm, thick, fill=blue!15},
    meas qubit/.style={draw, circle, minimum size=0.9cm, thick, fill=orange!20},
    strong edge/.style={->, very thick, red!70!black, bend left=14},
    mid edge/.style={->, thick, red!45!black, bend left=14},
    zero edge/.style={->, thin, gray!60, dashed, bend left=14},
    edge label/.style={font=\scriptsize, fill=white, inner sep=1.2pt, text=black}
]

\begin{scope}[local bounding box=panelA]
\node[above, font=\footnotesize\bfseries] at (1.6, 2.9) {(a) After the Bell pair};
\coordinate (a1) at (1.6, 2.2);
\coordinate (a2) at (0, 0);
\coordinate (a3) at (3.2, 0);

\draw[strong edge] (a2) to node[edge label, below] {$1$} (a3);
\draw[strong edge] (a3) to (a2);
\draw[zero edge] (a1) to (a2);
\draw[zero edge] (a2) to (a1);
\draw[zero edge] (a1) to (a3);
\draw[zero edge] (a3) to (a1);

\node[meas qubit] at (a1) {0};
\node[meas qubit] at (a2) {1};
\node[qubit] at (a3) {2};
\end{scope}

\begin{scope}[xshift=5.6cm, local bounding box=panelB]
\node[above, font=\footnotesize\bfseries] at (1.6, 2.9) {(b) Before the deferred measurements};
\coordinate (b1) at (1.6, 2.2);
\coordinate (b2) at (0, 0);
\coordinate (b3) at (3.2, 0);

\draw[strong edge] (b1) to node[edge label, above left=-2pt] {$0.75$} (b2);
\draw[strong edge] (b1) to node[edge label, above right=-2pt] {$0.75$} (b3);
\draw[mid edge] (b2) to node[edge label, below] {$0.25$} (b3);
\draw[mid edge] (b3) to (b2);
\draw[zero edge] (b2) to node[edge label, pos=0.35, left=1pt] {$0$} (b1);
\draw[zero edge] (b3) to node[edge label, pos=0.35, right=1pt] {$0$} (b1);

\node[meas qubit] at (b1) {0};
\node[meas qubit] at (b2) {1};
\node[qubit] at (b3) {2};
\end{scope}

\end{tikzpicture}
\caption{$K^{(Z)}$-network for the teleportation circuit in deferred-measurement form, input $\ket{\psi}=R_Y(\pi/3)\ket{0}$ on qubit 0.
Orange nodes ($0$, $1$) are the qubits measured in the protocol; node $2$ is the output.
(a)~After the Bell-pair preparation, the only structure is the symmetric maximal edge of the Bell pair.
(b)~Before the deferred measurements, the network is genuinely directed: $K^{(Z)}_{0\to1}=K^{(Z)}_{0\to2}=3/4$, $K^{(Z)}_{1\to2}=K^{(Z)}_{2\to1}=1/4$, and no edge enters qubit $0$ (dashed).}
\label{fig:teleport-K}
\end{figure}

Figure~\ref{fig:teleport-K}(b) has an operational interpretation.
The edge $K^{(Z)}_{0\to2}$ (respectively $K^{(Z)}_{1\to2}$) quantifies how strongly the pending $Z$ (respectively $X$) correction reshapes the output's conditional state, and for a real-amplitude input the two pending corrections split the input's Bloch information exactly: $\sin^2\varphi+\cos^2\varphi=1$.
The empty column into qubit $0$ states that no single-qubit readout anywhere in the register reveals anything about qubit $0$'s pending outcome.
Symmetric layers display none of this: quantum mutual information is nearly uniform here ($I(0{:}1)=I(0{:}2)\approx0.81$, $I(1{:}2)\approx1.19$ bits for the input shown), and the classical $Z$-basis mutual information vanishes identically on the $(0,1)$ and $(0,2)$ pairs even though $K^{(Z)}_{0\to1}=K^{(Z)}_{0\to2}=3/4$.
As in the Grover example, the directed structure is carried by conditional coherences, visible in correlators such as $\langle Z_0X_j\rangle=\braket{X}_\psi$.
For a general pure input the same expressions hold with $\braket{X}_\psi^2+\braket{Z}_\psi^2=1-\braket{Y}_\psi^2$; the $\braket{Y}_\psi^2$ deficit is invisible to every single-qubit target in this basis, an instance of the multi-qubit-target phenomenon discussed in Section~\ref{subsec:variants}.

\subsection{Comparison with the closest existing quantities}
\label{subsec:comparison}

The examples above allow a direct quantitative comparison between $K^{(Z)}$ and the quantities a reader might reach for first.
For an ordered pair $(i,j)$ and a $Z$-basis measurement on the source, Table~\ref{tab:comparison} reports the score $K^{(Z)}$ together with the trace-distance variant $K^{\mathrm{tr}}$ of \eqref{eq:Ktr-def}, the optimal probability $P^{\star}_{\mathrm{succ}}$ of guessing the source outcome from the target (Eq.~\eqref{eq:helstrom}), the fixed-basis one-way classical correlation $\chi^{(Z)}_{i\to j}$ of \eqref{eq:chi-def}, the quantum mutual information $I(i{:}j)$, the classical mutual information $I_Z$ of the joint $Z$-outcome distribution, and the Pauli correlators $|\langle Z_iZ_j\rangle|$ and $|\langle Z_iX_j\rangle|$.
The rows are the classically correlated state $\rho_{\mathrm{cl}}$ of \eqref{eq:classicalcorr}, the three-qubit Grover state after the phase oracle, and a ring and a diagonal edge of the QAOA state of Figure~\ref{fig:qaoa-K}.
All four rows have balanced branching ($p_0=p_1=\tfrac12$), so the values are directly comparable.

\begin{table}[ht]
\centering
\small
\setlength{\tabcolsep}{4.5pt}
\renewcommand{\arraystretch}{1.25}
\begin{tabular}{lcccccccc}
\hline
State / edge & $K^{(Z)}$ & $K^{\mathrm{tr}}$ & $P^{\star}_{\mathrm{succ}}$ & $\chi^{(Z)}$ & $I(i{:}j)$ & $I_Z$ & $|\langle Z_iZ_j\rangle|$ & $|\langle Z_iX_j\rangle|$ \\
\hline
$\rho_{\mathrm{cl}}$, any direction        & $1$     & $1$     & $1$     & $1$     & $1$     & $1$     & $1$     & $0$ \\
Grover after oracle, any pair              & $0.5$   & $0.25$  & $0.75$  & $0.311$ & $0.811$ & $0$     & $0$     & $0.5$ \\
QAOA ring edge $(0,1)$                     & $0.250$ & $0.250$ & $0.750$ & $0.209$ & $0.417$ & $0.095$ & $0.359$ & $0$ \\
QAOA diagonal edge $(0,2)$                 & $0.038$ & $0.038$ & $0.597$ & $0.030$ & $0.404$ & $0.004$ & $0.076$ & $0$ \\
\hline
\end{tabular}
\caption{$K^{(Z)}$ against the closest existing quantities on the examples of this paper ($Z$-basis source measurement; entropic quantities in bits).
Exact statevector values; entries are rounded.}
\label{tab:comparison}
\end{table}

Three observations summarize the table.
First, on the metric choice (Section~\ref{subsec:metric-choice}): the QAOA rows have $K^{(Z)}=K^{\mathrm{tr}}$ exactly, because the two conditional branches have equal purity, while the Grover ensemble has branches of unequal purity and $K^{(Z)}=2K^{\mathrm{tr}}$; in every row the balanced-case calibration $P^{\star}_{\mathrm{succ}}=\tfrac12(1+\sqrt{K^{\mathrm{tr}}})$ holds exactly, and $K^{(Z)}$ brackets it through \eqref{eq:Psucc-vs-K}.
Second, the entropic functional of the same ensemble, $\chi^{(Z)}$, induces the same ordering as $K^{(Z)}$ on these states; $K$ is its geometric, closed-form counterpart, and both are maximal on the classical state $\rho_{\mathrm{cl}}$, so neither certifies quantumness.
Third, no single standard column reproduces the pattern of $K^{(Z)}$ across rows.
$|\langle Z_iZ_j\rangle|$ and $I_Z$ vanish on the Grover row where $K^{(Z)}=0.5$; $|\langle Z_iX_j\rangle|$ vanishes on the QAOA rows where $K^{(Z)}$ separates ring from diagonal edges by a factor of $6.6$; and the quantum mutual information is nearly uniform across the two QAOA edges ($0.417$ versus $0.404$ bits) where $K^{(Z)}$ resolves the problem graph.
Each $K^{(Z)}$ edge is assembled from exactly the correlators of Section~\ref{subsec:estimating-K}, so the input data is ordinary. Its aggregation into a single bounded, basis-conditioned conditional-state weight reflects the circuit structure across all four cases.

\subsection{Edge-recovery benchmark}
\label{subsec:edge-recovery-benchmark}

The examples above isolate individual circuit mechanisms, so we also test whether the same edge weight recovers the generating interaction graph across random instances.
For each Erd\H{o}s--R\'enyi graph $G$ on $n$ qubits, graph edges are positives and non-edges are negatives.
In this setting, each pair score is evidence for or against an interaction edge in the graph used to generate the circuit.
A fixed cutoff would add an extra convention, since the scale of the scores can change with the circuit family, noise level, and intended visualization.
We therefore keep the scores continuous and evaluate whether true interaction edges rank above non-edges.
For each unordered pair we compute five scores: the symmetrized $K^{(Z)}$ score, quantum mutual information $I(i{:}j)$, the classical $Z$-basis mutual information $I_Z$, the absolute computational-basis correlator $|\langle Z_iZ_j\rangle|$, and a transverse correlator baseline chosen to give the raw-correlator comparison its best case
\begin{equation}
C_{Z\perp}(i,j)
=
\frac12\left[
\sqrt{\langle Z_iX_j\rangle^2+\langle Z_iY_j\rangle^2}
+
\sqrt{\langle Z_jX_i\rangle^2+\langle Z_jY_i\rangle^2}
\right].
\label{eq:zperp-baseline}
\end{equation}
The transverse magnitude is included to avoid depending on an arbitrary $X/Y$ phase convention; it is the natural low-cost baseline for phase information visible through the correlators used by \eqref{eq:fixed-basis-estimator}.
AUC gives a cutoff-free summary of this ordering.
Equivalently, AUC is the Mann--Whitney probability that a random true edge receives a larger score than a random non-edge, with ties counted as one half.
All entries below use exact statevector simulation over 200 independently sampled Erd\H{o}s--R\'enyi graphs for each row; empty and complete graphs are resampled so that both positive and negative pairs are present.
The AUC computation uses the tie convention above and no numerical optimization.

Table~\ref{tab:edge-recovery-benchmark} shows two families.
The first is a diagonal Ising phase oracle,
\[
\ket{+}^{\otimes n}\longmapsto
\prod_{(i,j)\in E(G)} \exp(-i\gamma Z_iZ_j/2)\ket{+}^{\otimes n},
\]
with $\gamma=0.4$.
Because the circuit is diagonal, the computational-basis distribution remains exactly uniform, so $I_Z$ and $|\langle Z_iZ_j\rangle|$ are at chance by construction.
Both $K^{(Z)}$ and $C_{Z\perp}$ recover the graph with AUC $1$ at this angle.
In this phase-oracle family, the same signal is also visible in the transverse correlator baseline.
The second family is a full $p=1$ QAOA layer, using the $R_{ZZ}(2\gamma)$ then $R_X(2\beta)$ convention of Section~\ref{sec:applications}, with $\gamma=1.2$ and $\beta=0.3$.
Here the mixer makes the $Z$-basis baselines nontrivial.
In these rows, $K^{(Z)}$ has higher AUC than mutual information and the $Z$-only baselines.

\begin{table}[ht]
\centering
\scriptsize
\setlength{\tabcolsep}{3.5pt}
\renewcommand{\arraystretch}{1.18}
\begin{tabular}{llccccc}
\hline
Family & $(n,p_{\mathrm{edge}})$ & $K^{(Z)}$ & $I(i{:}j)$ & $I_Z$ & $|\langle Z_iZ_j\rangle|$ & $C_{Z\perp}$ \\
\hline
Phase oracle & $(4,0.4)$ & $1.000\pm0.000$ & $1.000\pm0.000$ & $0.500\pm0.000$ & $0.500\pm0.000$ & $1.000\pm0.000$ \\
Phase oracle & $(4,0.6)$ & $1.000\pm0.000$ & $1.000\pm0.000$ & $0.500\pm0.000$ & $0.500\pm0.000$ & $1.000\pm0.000$ \\
Phase oracle & $(6,0.4)$ & $1.000\pm0.000$ & $0.996\pm0.024$ & $0.500\pm0.000$ & $0.500\pm0.000$ & $1.000\pm0.000$ \\
Phase oracle & $(6,0.6)$ & $1.000\pm0.000$ & $0.977\pm0.093$ & $0.500\pm0.000$ & $0.500\pm0.000$ & $1.000\pm0.000$ \\
Phase oracle & $(8,0.4)$ & $1.000\pm0.000$ & $0.995\pm0.016$ & $0.500\pm0.000$ & $0.500\pm0.000$ & $1.000\pm0.000$ \\
Phase oracle & $(8,0.6)$ & $1.000\pm0.000$ & $0.968\pm0.064$ & $0.500\pm0.000$ & $0.500\pm0.000$ & $1.000\pm0.000$ \\
\hline
QAOA & $(4,0.5)$ & $1.000\pm0.000$ & $1.000\pm0.000$ & $0.859\pm0.181$ & $0.859\pm0.181$ & $0.841\pm0.311$ \\
QAOA & $(6,0.5)$ & $0.980\pm0.061$ & $0.954\pm0.101$ & $0.792\pm0.161$ & $0.792\pm0.161$ & $0.841\pm0.165$ \\
QAOA & $(8,0.5)$ & $0.976\pm0.050$ & $0.934\pm0.095$ & $0.807\pm0.122$ & $0.807\pm0.122$ & $0.851\pm0.130$ \\
\hline
\end{tabular}
\caption{Edge-recovery AUC on random graph instances (mean $\pm$ standard deviation over 200 instances).
Phase-oracle rows use $\gamma=0.4$; QAOA rows use $\gamma=1.2,\beta=0.3$.
For the phase-oracle family, $I_Z$ and $|\langle Z_iZ_j\rangle|$ are exactly blind because the $Z$-basis distribution is uniform.
The transverse baseline $C_{Z\perp}$ is defined in \eqref{eq:zperp-baseline}.}
\label{tab:edge-recovery-benchmark}
\end{table}

The phase angle matters.
At $\gamma=\pi/2$ in the same $n=6$, $p_{\mathrm{edge}}=0.5$ phase-oracle family, the AUC for both $K^{(Z)}$ and $C_{Z\perp}$ drops to $0.570$, while $I_Z$ and $|\langle Z_iZ_j\rangle|$ remain exactly at $0.500$.
Figure~\ref{fig:phase-oracle-auc} shows the dependence on $\gamma$.
The point is that $K$ is a readout-basis-conditioned network weight tied to conditional-state distinguishability.

\begin{figure}[ht]
\centering
\begin{tikzpicture}[x=1cm,y=1cm,font=\scriptsize]
\draw[->,gray!70] (0,0) -- (8.35,0) node[right] {$\gamma$};
\draw[->,gray!70] (0,0) -- (0,3.75) node[above] {AUC};

\foreach \y/\lab in {0.30/0.50,1.80/0.75,3.30/1.00} {
  \draw[gray!25] (0,\y) -- (8.15,\y);
  \draw[gray!70] (-0.06,\y) -- (0.06,\y) node[left=3pt] {\lab};
}

\draw[gray!70] (0.70,-0.06) -- (0.70,0.06) node[below=4pt] {0.4};
\draw[gray!70] (2.78,-0.06) -- (2.78,0.06) node[below=4pt] {1.0};
\draw[gray!70] (4.77,-0.06) -- (4.77,0.06) node[below=4pt] {$\pi/2$};
\draw[gray!70] (8.00,-0.06) -- (8.00,0.06) node[below=4pt] {2.5};

\draw[gray!70,dashed,line width=0.8pt] (0,0.30) -- (8.00,0.30);
\node[gray!70,anchor=west] at (6.10,0.55) {$I_Z$, $|\langle ZZ\rangle|$};

\draw[blue!70!black,line width=1.0pt]
  (0.00,3.30) -- (0.70,3.25) -- (2.03,2.92) -- (2.78,2.62)
  -- (2.95,2.62) -- (4.77,0.70) -- (6.26,2.49) -- (8.00,3.06);
\foreach \x/\y in {0.00/3.30,0.70/3.25,2.03/2.92,2.78/2.62,2.95/2.62,4.77/0.70,6.26/2.49,8.00/3.06}
  \fill[blue!70!black] (\x,\y) circle (1.2pt);

\draw[red!75!black,line width=1.2pt]
  (0.00,3.30) -- (0.70,3.30) -- (2.03,3.30) -- (2.78,3.30)
  -- (2.95,3.30) -- (4.77,0.72) -- (6.26,3.30) -- (8.00,3.30);
\foreach \x/\y in {0.00/3.30,0.70/3.30,2.03/3.30,2.78/3.30,2.95/3.30,4.77/0.72,6.26/3.30,8.00/3.30}
  \fill[red!75!black] (\x,\y) circle (1.4pt);

\draw[red!75!black,line width=1.2pt] (5.60,3.50) -- (6.00,3.50);
\node[anchor=west] at (6.06,3.50) {$K^{(Z)}=C_{Z\perp}$};
\draw[blue!70!black,line width=1.0pt] (5.60,3.25) -- (6.00,3.25);
\node[anchor=west] at (6.06,3.25) {$I(i{:}j)$};
\end{tikzpicture}
\caption{Phase-oracle edge-recovery AUC versus phase angle for $n=6$, $p_{\mathrm{edge}}=0.5$ over 200 random graph instances.
The $Z$-basis baselines are exactly chance for all angles because the computational-basis distribution remains uniform.
At these angles $K^{(Z)}$ and the transverse correlator baseline $C_{Z\perp}$ coincide, while both show an aliasing dip near $\gamma=\pi/2$.}
\label{fig:phase-oracle-auc}
\end{figure}

\section{Discussion and Outlook}

The preceding results make $K$ a bounded, directed score for basis-conditioned branching in pairwise reduced states.
The score is invariant under local unitaries on the target, decreases under channels on the target, and can be computed from Bloch/correlation-tensor data.
In fixed Pauli bases, it can be estimated from a small set of one- and two-qubit correlators.
For pure two-qubit states, it equals squared concurrence.
In mixed states, it measures conditional-state distinguishability and can be maximal on separable states.

\subsection{Scope and limitations}
\label{subsec:limitations}

The directed notation should be interpreted narrowly.
$K_{i\to j}$ is directional because the source qubit is measured and the target qubit is conditioned; it is not a causal arrow and it need not follow the control-target orientation of gates in a circuit.
A controlled unitary can be symmetric in the induced conditional-state scores, while a deferred-measurement or feed-forward construction can produce a genuinely directed $K$-network.

The score is also pairwise and target-local.
It can miss information that is present only in a joint subsystem: a parity syndrome, for example, may reveal nothing about either data qubit separately while revealing the two-qubit parity perfectly.
The multi-target extension in Section~\ref{subsec:variants} addresses this in principle, at the cost of a larger target-state fidelity estimation problem.

Finally, $K$ aggregates ordinary Pauli data; the information in an edge is exactly what those correlators contain.
When the relevant transverse correlator is known in advance, a baseline such as $C_{Z\perp}$ matches $K$ on phase-oracle edge recovery.
What $K$ adds there is the bounded conditional-state interpretation and the single uniform network layer.

\subsection{Variants and extensions}
\label{subsec:variants}
The directed edge $K_{i\to j}^{(\hat n)}$ is the primary object.
Several variants follow from the same definition.
For basis-resolved analyses, one can keep separate Pauli layers $K^{(X)}$, $K^{(Y)}$, and $K^{(Z)}$.
A basis-optimized layer uses $K^{\max}_{i\to j}$ as the edge weight.
When an undirected summary is useful, one can average the two directions:
\begin{equation}
K^{\mathrm{sym}}_{ij}(\rho(t);\hat n)
= \tfrac12\big(K^{(\hat n)}_{i\to j} + K^{(\hat n)}_{j\to i}\big).
\end{equation}

The target can also be enlarged from one qubit $j$ to a subsystem $T$.
After measuring $i$, the conditional target states are $\rho_{T|0}$ and $\rho_{T|1}$, and the analogous score is
\[
4p_0p_1\bigl(1-\Fid(\rho_{T|0},\rho_{T|1})\bigr)
\]
when both branches occur, with value $0$ for deterministic branching.
This extension changes the experimental and visualization cost because fidelity estimation on a multi-qubit subsystem requires more information than the single-qubit Bloch-vector formula.
A mid-circuit parity check shows why the extension matters: after a coherent $R_X(\theta)$ error on one qubit of a Bell pair, a syndrome ancilla $a$ holding the $ZZ$ parity has $K^{(Z)}_{a\to j}=0$ for \emph{each individual} data qubit $j$, while the two-qubit-target score $K^{(Z)}_{a\to T}$ equals $\sin^2\theta$: parity information is invisible to every single-qubit target but maximal on the joint target.
We leave systematic multi-target networks to future work.

\subsection{Considerations for a practical implementation}
Equations~\eqref{eq:conditional-bloch}--\eqref{eq:qubit-fidelity-bloch} reduce the computation of a basis-resolved edge to elementary operations on Bloch vectors and a $3\times 3$ correlation tensor.
This is useful both in simulation, where $(\vec a,\vec b,T)$ can be extracted directly from a two-qubit RDM, and in experiments, where the required quantities are local Pauli expectations and two-point Pauli correlators.
For $K_{\max}$, Proposition~\ref{prop:kmax-continuity} ensures that a maximum exists, while Proposition~\ref{prop:lmm-closed-form} gives an exact benchmark for locally maximally mixed states.
For general states, the numerical optimization should report the grid resolution, number of starts, and agreement with any available independent grid check.
The induced binary cq ensemble also relates $K$ to a reliability parameter used in quantum polar coding; Section~\ref{subsec:related} discusses the relation and Appendix~\ref{sec:cq-reliability} records the identity.

\subsection{Outlook}
$K_{i\to j}$ is a measurement-based conditional-distinguishability score for analyzing correlation structure in quantum circuits.
The main use case is a family of $K$-networks, possibly basis-resolved, that can be compared with mutual-information and tomography-network layers.
This comparison is especially relevant for circuits with mid-circuit measurements, post-selection, ancilla-mediated structure, controlled gates, steering-like behavior, or changing quantum reference frames; the teleportation network of Section~\ref{subsec:teleportation} is a first example of this regime.

Table~\ref{tab:comparison} and the edge-recovery benchmark of Table~\ref{tab:edge-recovery-benchmark} give a first quantitative comparison.
They remain small exact-state studies.
The next step is a broader systematic study across graph families, depths, noise models, and shot budgets.
Such a study should ask where $K$ adds information beyond existing pairwise metrics, where it agrees with structure already visible to raw correlators, and how much experimental data is needed to estimate the nonlinear edge weights reliably.
The most important extensions are finite-shot error bars for the correlator estimator, hardware data with a fixed readout basis, and multi-target versions for parity and syndrome information that is invisible to every single-qubit target.

\section*{Acknowledgements}

This work is funded by the European Union’s Horizon Europe Framework Programme (HORIZON) under the ERA Chair scheme with grant agreement no. 101087126.

P.I. is supported with funds from the Ministry of Science, Research and Culture of the State of Brandenburg within the Centre for Quantum Technologies and Applications (CQTA). 
\begin{center}
    \includegraphics[width = 0.08\textwidth]{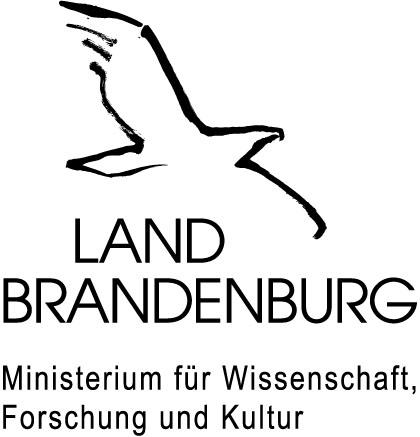}
\end{center}

\section*{Author Contributions}

K.B.: Conceptualization, Methodology, Formal analysis, Investigation, Visualization, Writing -- original draft, Writing -- review \& editing.
P.V.I.: Formal analysis, Investigation, Visualization, Writing -- original draft, Writing -- review \& editing.


\printbibliography

\section*{Appendix}
\appendix
\section{Basic properties}
\label{sec:basic-properties}

\begin{proposition}[Bounds]\label{prop:bounds}
For any $\rho$, $(i,j)$ and any rank-1 projective measurement on $i$,
\begin{equation}
  0 \le K_{i\to j}(\rho;\{\Pi_b\}) \le 1.
\end{equation}
\end{proposition}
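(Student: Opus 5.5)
The plan is to split according to the two cases in Definition~\ref{def:pairwiseK_Z}. If $p_0p_1=0$, the definition sets $K_{i\to j}=0$, and both inequalities hold trivially.

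So assume $p_0p_1>0$. Then both normalized conditional states $\rho_{j|0}$ and $\rho_{j|1}$ are well defined. I would first check that they are genuine density operators. Each $\sigma_{j|b}$ in \eqref{eq:unormalized_conditional_state} is a partial trace of the positive operator $(\Pi_b\otimes\id)\rho_{ij}(\Pi_b\otimes\id)$, so it is positive semidefinite. Dividing by its trace $p_b>0$ then gives a positive operator of unit trace.

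The score factors into two terms, and I would bound each separately.

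\emph{Branching factor.} The outcome probabilities are nonnegative with $p_0+p_1=\Tr(\sigma_{j|0}+\sigma_{j|1})=\Tr\rho_{ij}=1$, using $\Pi_0+\Pi_1=\id$. By AM--GM, $p_0p_1\le\bigl((p_0+p_1)/2\bigr)^2=\tfrac14$, hence $0\le 4p_0p_1\le 1$.

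\emph{Fidelity gap.} For density operators, $\Fid(\rho,\sigma)\in[0,1]$. The lower bound holds because the trace of a positive square root is nonnegative. The upper bound follows from $\Tr|\sqrt\rho\sqrt\sigma|\le\|\sqrt\rho\|_2\|\sqrt\sigma\|_2=1$, by Cauchy--Schwarz for the Hilbert--Schmidt inner product (equivalently, Uhlmann's theorem). This is the range already recorded in \eqref{eq:meas_fid_distingish}, so $\Delta_F\in[0,1]$.

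The product of two numbers in $[0,1]$ lies in $[0,1]$, which proves the claim. The only step needing any care is the bound $\Fid\le 1$, and I would simply cite the standard convention stated with the definition of $\Fid$. In the qubit setting one can alternatively read it off \eqref{eq:qubit-fidelity-bloch} using $|\vec r|,|\vec u|\le1$. The upper bound then follows from $\vec r\cdot\vec u+\sqrt{(1-|\vec r|^2)(1-|\vec u|^2)}\le1$, which is Cauchy--Schwarz applied to the four-vectors $(\vec r,\sqrt{1-|\vec r|^2})$ and $(\vec u,\sqrt{1-|\vec u|^2})$, both of unit length.
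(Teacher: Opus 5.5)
Your proof is correct and follows the paper's argument. You use the same case split on $p_0p_1$, then multiply $0\le 4p_0p_1\le 1$ by $0\le 1-\Fid\le 1$. The extra checks you include are facts the paper takes as standard: positivity of the conditional states, $p_0+p_1=1$ with AM--GM, and $\Fid\le 1$. For that last bound, $\|\sqrt\rho\sqrt\sigma\|_1\le\|\sqrt\rho\|_2\|\sqrt\sigma\|_2$ is strictly H\"older's inequality, or Cauchy--Schwarz once you write $\|X\|_1=\max_U|\Tr(UX)|$.
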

\begin{proof}
If $p_0p_1=0$, then $K_{i\to j}=0$ by Definition~\ref{def:pairwiseK_Z}.
Assume $p_0p_1>0$.
We have $0\le \Fid(\rho_{j|0},\rho_{j|1})\le 1$ and $0\le p_0p_1\le \tfrac14$,
so $0\le 4p_0p_1(1-\Fid)\le 4(\tfrac14)\cdot 1 = 1$.
\end{proof}

\begin{proposition}[Characterization of $K=0$]\label{prop:zero-characterization}
Assume $p_0,p_1>0$. Then
\[
K_{i\to j}(\rho;\{\Pi_b\})=0
\quad \Longleftrightarrow \quad
\rho_{j|0}=\rho_{j|1}.
\]
If $p_0=0$ or $p_1=0$, then $K=0$ by Definition~\ref{def:pairwiseK_Z}.
\end{proposition}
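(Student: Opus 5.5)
The plan is to reduce the statement to a standard fact about the Uhlmann fidelity: $\Fid(\rho,\sigma)=1$ if and only if $\rho=\sigma$. Assume $p_0,p_1>0$. Then $4p_0p_1>0$, and Definition~\ref{def:pairwiseK_Z} gives $K_{i\to j}=0$ exactly when $1-\Fid(\rho_{j|0},\rho_{j|1})=0$. The deterministic case $p_0p_1=0$ needs no argument, since $K=0$ there by definition.

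The direction $\rho_{j|0}=\rho_{j|1}\Rightarrow K=0$ is immediate. For any density operator $\rho$ we have $\sqrt{\sqrt\rho\,\rho\,\sqrt\rho}=\rho$, so $\Fid(\rho,\rho)=(\Tr\rho)^2=1$.

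For the converse, suppose $\Fid(\rho_{j|0},\rho_{j|1})=1$. The cleanest route is the Fuchs--van de Graaf inequality \eqref{eq:fvdg}, which the paper has already stated. Its upper bound gives
\[
\Tdist(\rho_{j|0},\rho_{j|1})\le\sqrt{1-\Fid(\rho_{j|0},\rho_{j|1})}=0.
\]
Since $\Tdist=\tfrac12\|\cdot\|_1$ is a norm of the difference, it follows that $\rho_{j|0}=\rho_{j|1}$.

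As a self-contained check for the qubit target, the closed form \eqref{eq:qubit-fidelity-bloch} gives the same result. Write $\Fid=1$ as
\[
\vec r\cdot\vec u+\sqrt{(1-|\vec r|^2)(1-|\vec u|^2)}=1 .
\]
By Cauchy--Schwarz in $\mathbb{R}^4$, applied to the unit vectors $(\vec r,\sqrt{1-|\vec r|^2})$ and $(\vec u,\sqrt{1-|\vec u|^2})$, equality forces $\vec r=\vec u$.

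The only step carrying real content is the implication $\Fid=1\Rightarrow$ equality. I would rely on the cited Fuchs--van de Graaf inequality for it, so no step should be a genuine obstacle.
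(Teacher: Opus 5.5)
Your proof is correct and follows the paper's argument: since $4p_0p_1>0$, $K_{i\to j}=0$ reduces to $\Fid(\rho_{j|0},\rho_{j|1})=1$, and fidelity one forces the two states to be equal. The paper simply cites this last fact from Jozsa, while you derive it from the Fuchs--van de Graaf bound $\Tdist\le\sqrt{1-\Fid}$ and add a valid Cauchy--Schwarz check in Bloch coordinates; both are sound justifications of the same step.
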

\begin{proof}
With $p_0p_1>0$, $K=0$ iff $\Fid(\rho_{j|0},\rho_{j|1})=1$.
For density operators, $\Fid(\rho,\sigma)=1$ iff $\rho=\sigma$ \cite{JozsaFidelity}.
\end{proof}

\begin{proposition}[Local-unitary invariance on the target qubit]\label{prop:local-unitary}
Let $U$ be a single-qubit unitary on qubit $j$ and define
$\rho' := (\id_i\otimes U_j)\,\rho_{ij}\,(\id_i\otimes U_j^\dagger)$.
Then
\[
K_{i\to j}(\rho';\{\Pi_b\}) = K_{i\to j}(\rho;\{\Pi_b\}).
\]
\end{proposition}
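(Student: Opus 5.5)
The plan is to track how the conditional ensemble transforms under $U_j$ and then use unitary invariance of the fidelity. Since $U_j$ acts only on qubit $j$ and the projectors act only on qubit $i$, the operators $\Pi^{(i)}_b\otimes\id_j$ and $\id_i\otimes U_j$ commute. Substituting $\rho'$ into \eqref{eq:unormalized_conditional_state} then gives
\[
\sigma'_{j|b}=\Tr_i\!\Big[(\id_i\otimes U_j)(\Pi^{(i)}_b\otimes\id_j)\rho_{ij}(\Pi^{(i)}_b\otimes\id_j)(\id_i\otimes U_j^\dagger)\Big].
\]
The partial trace over $i$ commutes with conjugation by an operator acting only on $j$, so $\sigma'_{j|b}=U\sigma_{j|b}U^\dagger$.

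Two consequences follow directly. First, $p'_b=\Tr(U\sigma_{j|b}U^\dagger)=\Tr(\sigma_{j|b})=p_b$, so the branching factor $4p_0p_1$ is unchanged. In particular, the case $p_0p_1=0$ is preserved and gives $K=0$ on both sides by Definition~\ref{def:pairwiseK_Z}. Second, when $p_0p_1>0$, the normalized conditional states satisfy $\rho'_{j|b}=U\rho_{j|b}U^\dagger$.

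It remains to show $\Fid(U\rho U^\dagger,U\sigma U^\dagger)=\Fid(\rho,\sigma)$. I would argue this from the definition. Because $\sqrt{U\rho U^\dagger}=U\sqrt{\rho}\,U^\dagger$ (functional calculus on the spectral decomposition), we get
\[
\sqrt{U\rho U^\dagger}\,U\sigma U^\dagger\sqrt{U\rho U^\dagger}=U\sqrt\rho\,\sigma\sqrt\rho\,U^\dagger.
\]
Applying the same functional-calculus identity to take the square root of this expression, and then using cyclicity of the trace, gives equal traces. Squaring yields equal fidelities, and hence $K$ is unchanged. Alternatively, in Bloch form, $U$ acts as a rotation $R_U\in SO(3)$ on both conditional vectors $\vec r_\pm$, and \eqref{eq:qubit-fidelity-bloch} depends only on $\vec r\cdot\vec u$ and the norms, which rotations preserve.

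There is no real obstacle here. The only point needing care is the commutation of the partial trace over $i$ with conjugation on $j$, which holds because $\Tr_i[(\id\otimes U)X(\id\otimes U^\dagger)]=U\,\Tr_i[X]\,U^\dagger$; this identity can be checked on product operators and extended by linearity.
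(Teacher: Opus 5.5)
Your proposal is correct and follows the same route as the paper's proof: the branch probabilities are unchanged, the conditional states transform as $\rho'_{j|b}=U\rho_{j|b}U^\dagger$, and fidelity is invariant under unitary conjugation. Where the paper cites that invariance, you verify it directly via functional calculus, and you also treat the deterministic case $p_0p_1=0$ explicitly, which the paper leaves implicit.
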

\begin{proof}
The measurement on $i$ produces the same probabilities $p_b$.
The conditional states transform as $\rho'_{j|b}=U\rho_{j|b}U^\dagger$.
Fidelity is invariant under unitary conjugation \cite{JozsaFidelity}, hence $K$ is unchanged.
\end{proof}

\begin{proposition}[Vanishing on product states]\label{prop:product-zero}
If $\rho_{ij}=\rho_i\otimes\rho_j$ is a product state, then for \emph{any} projective measurement on $i$,
\[
K_{i\to j}(\rho;\{\Pi_b\})=0.
\]
\end{proposition}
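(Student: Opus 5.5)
The plan is to compute the unnormalized conditional states of Eq.~\eqref{eq:unormalized_conditional_state} directly for a product input and show that they are proportional to $\rho_j$. Then Proposition~\ref{prop:zero-characterization} (or the definition itself) finishes the argument.

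\emph{Step 1 (conditional states).} Substituting $\rho_{ij}=\rho_i\otimes\rho_j$ gives
\[
\sigma_{j|b}=\Tr_i\!\big[(\Pi_b\rho_i\Pi_b)\otimes\rho_j\big]=\Tr(\Pi_b\rho_i\Pi_b)\,\rho_j=\Tr(\Pi_b\rho_i)\,\rho_j .
\]
Here we use only that the partial trace factorizes on product operators and that $\Pi_b^2=\Pi_b$. Hence $p_b=\Tr(\Pi_b\rho_i)$, which is the Born probability of outcome $b$ on the source marginal alone.

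\emph{Step 2 (case split).} If $p_0p_1=0$, then $K_{i\to j}=0$ by Definition~\ref{def:pairwiseK_Z}, and nothing more is needed. If $p_0p_1>0$, both normalized states satisfy $\rho_{j|b}=\sigma_{j|b}/p_b=\rho_j$. Thus $\rho_{j|0}=\rho_{j|1}$, so $\Fid(\rho_{j|0},\rho_{j|1})=1$ and $K_{i\to j}=0$. This is the ``if'' direction of Proposition~\ref{prop:zero-characterization}, or simply $\Fid(\rho,\rho)=1$.

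\emph{Step 3 (scope).} The computation in Step 1 uses no property of $\Pi_b$ beyond its being a projector. The result therefore holds for any projective measurement on $i$, as stated, and in particular for every direction $\hat n$. As a cross-check in the Bloch picture, a product state has $T=\vec a\,\vec b^{\mathsf T}$. Then Eq.~\eqref{eq:conditional-bloch} gives
\[
\vec r_s=\frac{\vec b\,(1+s\,\vec a\cdot\hat n)}{1+s\,\vec a\cdot\hat n}=\vec b,
\]
which is consistent with Step 2.

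There is no real obstacle here. The only point requiring care is handling the deterministic branch via the definitional convention rather than dividing by $p_b=0$.
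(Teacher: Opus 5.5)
Your proposal is correct and follows essentially the same route as the paper. The paper argues that $\rho_{j|0}=\rho_{j|1}=\rho_j$ whenever both branches occur and falls back on the deterministic-branch convention otherwise; your explicit computation $\sigma_{j|b}=\Tr(\Pi_b\rho_i)\,\rho_j$ and the Bloch cross-check via $T=\vec a\,\vec b^{\mathsf T}$ simply spell out that step in more detail.
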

\begin{proof}
For $\rho_{ij}=\rho_i\otimes\rho_j$, conditioning on an outcome on $i$ does not affect the state of $j$:
$\rho_{j|0}=\rho_{j|1}=\rho_j$ whenever both branches occur. If one branch is absent, Definition~\ref{def:pairwiseK_Z} sets $K=0$.
Hence $K=0$ in all cases.
\end{proof}

\section{Pure two-qubit states: \texorpdfstring{$K$}{K} is basis-independent and equals the tangle}
\label{sec:pure-twoqubit}

\begin{proposition}[Pure two-qubit reduction]
\label{prop:pure-reduction}
Let $\rho_{ij} = \ket{\psi}\!\bra{\psi}$ be a pure state on two qubits $(i,j)$.
Fix \emph{any} rank-1 projective measurement $\{\Pi^{(i)}_0,\Pi^{(i)}_1\}$ on qubit $i$,
and define $p_b,\rho_{j|b}$ and $K_{i\to j}(\rho;\{\Pi_b\})$ as in Def.~1.
Then
\begin{equation}
K_{i\to j}\!\left(\ket{\psi}\!\bra{\psi};\{\Pi_b\}\right) \;=\; 4\,\det(\rho_i),
\label{eq:K-4det}
\end{equation}
where $\rho_i = \Tr_j(\ket{\psi}\!\bra{\psi})$ is the single-qubit reduced state.
In particular, for pure two-qubit states $K_{i\to j}$ is independent of the chosen measurement basis.

Moreover, the right-hand side equals the (two-qubit) \emph{tangle}:
\begin{equation}
K_{i\to j}\!\left(\ket{\psi}\!\bra{\psi};\{\Pi_b\}\right) \;=\; \tau_{ij}
\;=\; C(\ket{\psi})^2,
\end{equation}
where $\tau_{ij}$ denotes the Coffman--Kundu--Wootters tangle~\cite{CKWTangle} and $C(\ket{\psi})$ the
two-qubit concurrence~\cite{WoottersConcurrence}.
\end{proposition}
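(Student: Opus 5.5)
Expand $\ket{\psi}$ in the measurement basis of qubit $i$: $\ket{\psi}=\sum_{b}\ket{b}\otimes\ket{\tilde\phi_b}$, where $\ket{\tilde\phi_b}=(\bra{b}\otimes\id)\ket{\psi}$ is unnormalized. Then $\sigma_{j|b}=\ket{\tilde\phi_b}\!\bra{\tilde\phi_b}$ and $p_b=\braket{\tilde\phi_b|\tilde\phi_b}$. The conditional states are therefore pure, $\rho_{j|b}=\ket{\phi_b}\!\bra{\phi_b}$ with $\ket{\phi_b}=\ket{\tilde\phi_b}/\sqrt{p_b}$. For pure states the squared Uhlmann fidelity is the overlap $\Fid=|\braket{\phi_0|\phi_1}|^2$. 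Hence, when $p_0p_1>0$,
\[
K_{i\to j}=4\bigl(p_0p_1-|\braket{\tilde\phi_0|\tilde\phi_1}|^2\bigr).
\]

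Now compute $\rho_i$ in the same basis. Its matrix elements are $(\rho_i)_{bb'}=\braket{\tilde\phi_{b'}|\tilde\phi_b}$, so $\rho_i$ is the Gram matrix of $\{\ket{\tilde\phi_0},\ket{\tilde\phi_1}\}$. Its determinant is $p_0p_1-|\braket{\tilde\phi_0|\tilde\phi_1}|^2$, which gives \eqref{eq:K-4det}. The determinant is basis-independent, so this already proves basis independence.

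The deterministic case needs a separate check. If $p_0p_1=0$, say $\ket{\tilde\phi_1}=0$, then $\rho_i=\ket{0}\!\bra{0}$ has zero determinant, consistent with $K=0$ from Definition~\ref{def:pairwiseK_Z}. So the identity holds in all cases.

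For the tangle, use the Schmidt decomposition $\ket{\psi}=\sqrt{\lambda_0}\ket{00}+\sqrt{\lambda_1}\ket{11}$, with a local-unitary change of basis. Concurrence is local-unitary invariant, and $C=2\sqrt{\lambda_0\lambda_1}$, since $C=|\braket{\psi|\sigma_y\otimes\sigma_y|\psi^*}|=2|\alpha\delta-\beta\gamma|$ for $\ket{\psi}=\alpha\ket{00}+\beta\ket{01}+\gamma\ket{10}+\delta\ket{11}$. Meanwhile $\det\rho_i=\lambda_0\lambda_1$, so $4\det\rho_i=C^2=\tau_{ij}$. Alternatively, one can verify directly that $\det\rho_i=|\alpha\delta-\beta\gamma|^2$ from the coefficient matrix $M$, since $\rho_i=MM^\dagger$.

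The only mildly delicate step is the Gram-matrix identification and its conjugation convention, which is harmless because the determinant is real. Everything else is routine.
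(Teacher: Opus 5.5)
Your proof is correct and follows the paper's argument essentially step by step: pure conditional kets $(\bra{b}\otimes\id)\ket{\psi}$, fidelity as the squared overlap, reading $p_b$ and the overlap off the entries of $\rho_i$ to get $K=4\det\rho_i$, and a separate check of the $p_0p_1=0$ case. There are two small differences. You derive $4\det\rho_i=C^2$ via $\rho_i=MM^\dagger$ and $C=2|\alpha\delta-\beta\gamma|$, whereas the paper only cites this identity. Your convention $(\rho_i)_{bb'}=\braket{\tilde\phi_{b'}|\tilde\phi_b}$ is also the exact one: the paper's $\braket{\psi_0|\psi_1}=\bra{0}\rho_i\ket{1}$ holds only up to complex conjugation, which is harmless because only $|c|^2$ enters.
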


\begin{proof}
Let $\{\ket{0},\ket{1}\}$ denote the measurement basis on qubit $i$,
so that $\Pi^{(i)}_b = \ket{b}\!\bra{b}$.
Define the unnormalized conditional kets on qubit $j$ by
\[
\ket{\psi_b} \;:=\; \big(\bra{b}\otimes I\big)\ket{\psi}\in \mathbb{C}^2.
\]
Then $p_b=\braket{\psi_b|\psi_b}$ and, for $p_b>0$,
\[
\rho_{j|b} = \ket{\phi_b}\!\bra{\phi_b},\qquad \ket{\phi_b}=\ket{\psi_b}/\sqrt{p_b}.
\]
Because the conditional states are pure,
\[
\Fid(\rho_{j|0},\rho_{j|1}) = |\braket{\phi_0|\phi_1}|^2
= \frac{|\braket{\psi_0|\psi_1}|^2}{p_0 p_1}
\quad (p_0 p_1>0).
\]
Now observe that
\[
p_b = \bra{b}\rho_i\ket{b},\qquad
\braket{\psi_0|\psi_1} = \bra{0}\rho_i\ket{1},
\]
where $\rho_i=\Tr_j(\ket{\psi}\!\bra{\psi})$.
Writing the $2\times 2$ matrix of $\rho_i$ in this basis as
\[
\rho_i =
\begin{pmatrix}
p_0 & c\\
c^* & p_1
\end{pmatrix}
\quad\text{with } c=\bra{0}\rho_i\ket{1},
\]
we have $\det(\rho_i)=p_0p_1-|c|^2$.
Plugging into Def.~1 yields (for $p_0p_1>0$)
\[
K_{i\to j} = 4p_0p_1\!\left(1-\frac{|c|^2}{p_0p_1}\right)=4(p_0p_1-|c|^2)=4\det(\rho_i).
\]
If $p_0p_1=0$, then $K_{i\to j}=0$ by Definition~\ref{def:pairwiseK_Z}, and also $\det(\rho_i)=0$,
so \eqref{eq:K-4det} holds in all cases.

Finally, for a pure two-qubit state the CKW tangle satisfies $\tau_{ij}=4\det(\rho_i)$,
and equals concurrence-squared, hence the last claim.
\end{proof}

\section{Monotonicity under channels on the target qubit}
\label{sec:target-monotone}

\begin{proposition}[Monotonicity under local noise on the target]
\label{prop:monotone-target}
Fix qubits $(i,j)$ and a projective measurement $\{\Pi^{(i)}_0,\Pi^{(i)}_1\}$ on $i$.
Let $\Phi$ be any CPTP map (quantum channel) acting on qubit $j$ and define
\[
\rho'_{ij} := (I_i\otimes \Phi_j)(\rho_{ij}).
\]
Then
\begin{equation}
K_{i\to j}(\rho'_{ij};\{\Pi_b\}) \;\le\; K_{i\to j}(\rho_{ij};\{\Pi_b\}).
\label{eq:K-monotone-target}
\end{equation}
Consequently, the basis-optimized variant also satisfies
\[
K^{\max}_{i\to j}(\rho'_{ij}) \le K^{\max}_{i\to j}(\rho_{ij}).
\]
\end{proposition}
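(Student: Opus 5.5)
The plan is to show that the channel on $j$ commutes with the measurement on $i$. This means the branch probabilities are unchanged, the channel simply acts on each conditional state, and monotonicity of fidelity under CPTP maps then finishes the argument.

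First, I would compute the unnormalized conditional states of $\rho'_{ij}$. Since $\Pi_b^{(i)}\otimes\id_j$ acts only on $i$ and $I_i\otimes\Phi_j$ acts only on $j$, the two commute, and the partial trace over $i$ commutes with a map acting on $j$. This gives
\[
\sigma'_{j|b}=\Tr_i\!\big[(\Pi_b\otimes\id)(I\otimes\Phi)(\rho_{ij})(\Pi_b\otimes\id)\big]=\Phi(\sigma_{j|b}).
\]
The cleanest justification is to expand $\Phi$ in Kraus form $\{A_k\}$ with $\sum_k A_k^\dagger A_k=\id$, and then use cyclicity of the partial trace on $i$ together with the fact that Kraus operators on $j$ pass through $\Pi_b\otimes\id$. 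Because $\Phi$ is trace preserving, $p'_b=\Tr\Phi(\sigma_{j|b})=p_b$. Linearity then gives $\rho'_{j|b}=\Phi(\rho_{j|b})$ whenever $p_b>0$.

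Second, I would split into cases. If $p_0p_1=0$, both sides vanish by Definition~\ref{def:pairwiseK_Z}, since the probabilities are unchanged. If $p_0p_1>0$, the prefactor $4p_0p_1$ is identical on both sides, so it suffices to show $\Fid(\Phi(\rho_{j|0}),\Phi(\rho_{j|1}))\ge \Fid(\rho_{j|0},\rho_{j|1})$. This is the standard data-processing (monotonicity) property of Uhlmann fidelity under CPTP maps. I would cite it, or derive it quickly from Uhlmann's theorem: take purifications achieving the fidelity, apply a Stinespring isometry of $\Phi$, which preserves their overlap, and note that the results purify the output states. Uhlmann's theorem then gives the inequality. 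This yields \eqref{eq:K-monotone-target}.

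Third, for $K^{\max}$, I would note that the first step holds for every direction $\hat n$, so $K^{(\hat n)}_{i\to j}(\rho')\le K^{(\hat n)}_{i\to j}(\rho)\le K^{\max}_{i\to j}(\rho)$ for all $\hat n$. Taking the maximum over $\hat n$, which exists by Proposition~\ref{prop:kmax-continuity}, gives the claim. The only step with real content is the fidelity monotonicity, and since it is a standard result I expect no genuine obstacle. The point that needs care is verifying that the probabilities are unchanged, because this is exactly where the argument would fail for a channel on the source, as the paper remarks.
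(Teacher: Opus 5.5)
Your proposal is correct and follows the same route as the paper's proof. The branch probabilities are unchanged, $\rho'_{j|b}=\Phi(\rho_{j|b})$, fidelity is monotone under channels, and the $K^{\max}$ bound follows by applying the pointwise inequality for each direction and then maximizing; you also spell out details the paper leaves implicit, namely the Kraus/trace-preservation justification, the $p_0p_1=0$ case, and an Uhlmann--Stinespring derivation of fidelity monotonicity.
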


\begin{proof}
The measurement on $i$ is unchanged, so the outcome probabilities $p_b$ are the same for
$\rho_{ij}$ and $\rho'_{ij}$.
Moreover, the conditional states on $j$ transform as
\[
\rho'_{j|b} = \Phi(\rho_{j|b}) \quad (p_b>0).
\]
Fidelity is monotone non-decreasing under channels~\cite{WatrousQIT}:
\[
\Fid(\rho_{j|0},\rho_{j|1}) \le \Fid(\Phi(\rho_{j|0}),\Phi(\rho_{j|1})) = \Fid(\rho'_{j|0},\rho'_{j|1}).
\]
Therefore $1-\Fid$ can only decrease, and since $4p_0p_1$ is unchanged, \eqref{eq:K-monotone-target}
follows immediately from Def.~1. The statement for $K^{\max}$ follows by taking a maximum over
measurement bases after applying the pointwise inequality for each basis.
\end{proof}

The preceding monotonicity statement is \emph{not} an LOCC monotonicity statement: it only asserts
monotonicity under channels on the \emph{target} qubit $j$ with the measurement on $i$ fixed.
Noise or preprocessing on $i$ can change the branching probabilities and the induced ensemble
on $j$, so no monotonicity should be expected in that direction.

\section{Operational interpretation: guessing the measurement outcome}
\label{sec:operational}

Throughout this appendix, assume $p_0p_1>0$ unless stated otherwise.
When one prior is zero, the measurement outcome is deterministic:
$P_{\mathrm{succ}}^\star=1$ and $K=0$ by Definition~\ref{def:pairwiseK_Z}.
Consequently, outside the balanced case, $K$ alone should not be read as a guessing probability.

\subsection{Helstrom discrimination of the induced ensemble}
The measurement on qubit $i$ induces the binary classical--quantum ensemble
\begin{equation}
\mathcal{E}_{i\to j}(\{\Pi_b\}) := \big\{(p_0,\rho_{j|0}),\, (p_1,\rho_{j|1})\big\},
\label{eq:ensemble}
\end{equation}
where $p_b$ are the measurement outcome probabilities and $\rho_{j|b}$ the corresponding
conditional states.
The optimal success probability for guessing $b$
from a measurement on $j$ is given by the Holevo--Helstrom theorem~\cite{BaeKwekReview,WatrousQIT}:
\begin{equation}
P_{\mathrm{succ}}^\star(\mathcal{E}_{i\to j})
=
\frac{1}{2}\Big(1 + \big\|p_0\rho_{j|0} - p_1\rho_{j|1}\big\|_1\Big).
\label{eq:helstrom}
\end{equation}
This makes precise the statement that the measurement outcome on $i$ leaves an observable imprint on $j$
when the two conditional states $\rho_{j|0}$ and $\rho_{j|1}$ are easy to distinguish.

\subsection{Connecting \texorpdfstring{$K$}{K} to trace distance via Fuchs--van de Graaf}
Let
\[
D := \Tdist(\rho_{j|0},\rho_{j|1}) := \frac{1}{2}\|\rho_{j|0}-\rho_{j|1}\|_1
\]
denote the trace distance between the conditional states. The Fuchs--van de Graaf inequalities~\cite{FuchsVanDeGraaf}
imply the two-sided bound
\begin{equation}
D^2 \;\le\; 1 - \Fid(\rho_{j|0},\rho_{j|1}) \;\le\; 2D - D^2,
\label{eq:fvdg-two-sided}
\end{equation}
valid for all density operators.
Multiplying \eqref{eq:fvdg-two-sided} by $4p_0p_1$ and using Def.~1 yields
\begin{equation}
4p_0p_1\,D^2
\;\le\;
K_{i\to j}(\rho;\{\Pi_b\})
\;\le\;
4p_0p_1\,(2D - D^2).
\label{eq:K-sandwich}
\end{equation}

\paragraph{Balanced case.}
When $p_0=p_1=\tfrac{1}{2}$, we have $K_{i\to j} = 1 - \Fid(\rho_{j|0},\rho_{j|1})$ and
\eqref{eq:K-sandwich} simplifies to
\begin{equation}
D^2 \le K_{i\to j} \le 2D - D^2,
\qquad\text{equivalently}\qquad
1 - \sqrt{1-K_{i\to j}} \le D \le \sqrt{K_{i\to j}}.
\label{eq:D-vs-K-balanced}
\end{equation}
For equal priors, Helstrom reduces to
\[
P_{\mathrm{succ}}^\star = \frac{1}{2}(1+D),
\]
so \eqref{eq:D-vs-K-balanced} gives an explicit interval containing the optimal success probability
consistent with a given $K$ value:
\begin{equation}
1 - \frac{1}{2}\sqrt{1-K_{i\to j}}
\;\le\;
P_{\mathrm{succ}}^\star
\;\le\;
\frac{1}{2}\big(1+\sqrt{K_{i\to j}}\big).
\label{eq:Psucc-vs-K}
\end{equation}

\section{Relation to cq-channel reliability}
\label{sec:cq-reliability}

For $p_0p_1>0$, the measurement on qubit $i$ induces a binary classical--quantum (cq) state on qubit $j$:
\begin{equation}
  \rho_{XB} = \sum_{b\in\{0,1\}} p_b \ket{b}\!\bra{b}_X \otimes \rho_{j|b}.
\end{equation}
In the quantum polar coding literature, a standard quantity is the \emph{reliability parameter}~\cite{RenesPolarCoding,WildeGuha}
\begin{equation}
  Z(X|B)_\rho := 2\sqrt{p_0 p_1}\, f(\rho_{j|0},\rho_{j|1}),
\end{equation}
where $f(\rho,\sigma) := \|\sqrt{\rho}\sqrt{\sigma}\|_1$ is the root fidelity, so $\Fid=f^2$ for the squared fidelity convention used here.
Then
\begin{equation}
  K_{i\to j} = 4p_0 p_1 - Z(X|B)^2.
  \label{eq:K-vs-Z}
\end{equation}
Thus $K$ is the complement of $Z^2$ within the branching envelope $4p_0 p_1$.
When $p_0p_1=0$, Definition~\ref{def:pairwiseK_Z} sets $K=0$; the reliability parameter is then unnecessary because only one branch occurs.

\end{document}